\numberwithin{equation}{subsection}
\theoremstyle{definition}
\newtheorem{ass}{Assumption}[section]
\newtheorem{theorem}[ass]{Theorem}
\newtheorem{lemma}[ass]{Lemma}
\newtheorem{definition}[ass]{Definition}
\newtheorem{corollary}[ass]{Corollary}
\def\indexname{Index of terminology}
\newcommand{\captionfonts}{\footnotesize}
\long\def\@makecaption#1#2{%
  \vskip\abovecaptionskip
  \sbox\@tempboxa{{\captionfonts #1: #2}}%
  \ifdim \wd\@tempboxa >\hsize
    {\captionfonts #1: #2\par}
  \else
    \hbox to\hsize{\hfil\box\@tempboxa\hfil}%
  \fi
  \vskip\belowcaptionskip}
\definecolor{darkolivegreen}{rgb}{0.333333, 0.419608, 0.1843140}
\def\printnotation{{%
\def\indexname{Index of notation}
\begin{theindex}
\@input{\jobname.ntn}
\end{theindex}
}}
\begin{document}


\title{Stability study of a model for the Klein-Gordon equation  
in Kerr space-time}

\author[1,2,3]{Horst Reinhard Beyer}
\author[3]{Miguel Alcubierre}
\author[3]{Miguel Megevand}
\author[3,4]{Juan Carlos Degollado}
\affil[1]{Instituto Tecnol\'ogico Superior de Uruapan, 
Carr. Uruapan-Carapan No. 5555, Col. La Basilia, Uruapan, 
Michoac\'an. M\'exico}
\affil[2]{Theoretical Astrophysics, IAAT, Eberhard Karls 
University of T\"ubingen, T\"ubingen 72076, Germany}
\affil[3]{Instituto de Ciencias Nucleares, Universidad Nacional
  Aut\'onoma de M\'exico, Circuito Exterior C.U., A.P. 70-543,
  M\'exico D.F. 04510, M\'exico}
\affil[4]{ Instituto de Astronom\'{\i}a, Universidad Nacional
  Aut\'onoma de M\'exico, Circuito Exterior C.U., A.P. 70-264,
  M\'exico D.F. 04510, M\'exico}
\date{\today}                                     

\maketitle

\begin{abstract}

The current early stage in the investigation of the stability 
of the Kerr metric is characterized by the study of appropriate 
model problems. Particularly interesting is the problem 
of the stability of the solutions of the Klein-Gordon equation, 
describing the propagation of a scalar field of mass $\mu$
in the background of a rotating black hole. Rigorous 
results proof the stability of the reduced, by separation 
in the azimuth angle in 
Boyer-Lindquist coordinates, field for sufficiently large masses.
Some, but not all, numerical investigations find  
instability of the reduced field for rotational parameters $a$
extremely close to $1$. Among others, the paper 
derives a model problem for the equation which supports 
the instability of the field down to 
$a/M \approx 0.97$.  
 
\end{abstract}

\section{Introduction}
\label{introduction}

The discussion of the stability of the Kerr black hole is
still in its early stages. The first intermediate goal is the proof or disproof 
of its stability under ``small'' perturbations.
\newline
\linebreak  
In the case of the Schwarzschild metric, by using the 
Regge-Wheeler-Zerilli-Moncrief (RWZM) decomposition of fields 
in a Schwarzschild background \cite{regge,zerilli,moncrief,teukolsky,
press,kegeles}, the question of 
the stability of linearized fields can be 
completely reduced to the question of the stability of the 
solutions 
of the wave equation on Schwarzschild space-time. In comparison 
to Schwarzschild space-time, 
the case of Kerr space-time is complicated by a lower dimensional
symmetry group and the absence of a Killing field 
that is everywhere
time-like outside the horizon. For instance, the latter is reflected 
in the fact that energy densities corresponding to the Klein-Gordon field
in a Kerr gravitational field have no definite sign. 
This absence complicates the application
of methods from operator theory, as used in this paper, and also 
of so called  
``energy methods'' that are both employed in estimating
the decay of solutions of hyperbolic partial differential 
equations.
\newline
\linebreak
For  
Kerr space-time, a reduction similar to RWZM for Schwarzschild space-time
is not known. On the other hand, the 
finding of a symmetry operator, containing only time 
derivatives up to first order, for a rescaled wave 
operator gives hope that such a reduction might 
exist \cite{beyercraciun}. 
If such 
reduction exists,  
there is no guarantee that the relevant equation is the 
scalar wave equation. It is quite possible that such 
equation contains 
an additional (even positive) potential term that, similar to the 
potential term introduced by a mass of the field, see below, 
could result in instability of the solutions. Second, an
instability of a massive scalar field in a Kerr background 
could indicate instability of the metric against perturbations 
by matter which generically has mass. If this were the case, 
even a proof of the stability 
of Kerr space-time could turn out as 
a purely mathematical exercise with little
relevance for general relativity. 
\newline
\linebreak
Currently, the 
main focus is the study of  
the stability of the solutions of the Klein-Gordon field on 
a Kerr background with the hope that the results lead to insight 
into the problem of linearized stability. Although the results 
of this paper also apply to the case that $\mu = 0$, its 
main focus is the case of Klein-Gordon fields 
of mass $\mu > 0$.
\newline
\linebreak
Quite differently from the case of a Schwarzschild background, the
results for these test cases suggest an
asymmetry between the cases 
$\mu = 0$ and $\mu \neq 0$. In the case of 
the wave equation, i.e., $\mu = 0$, rigorous mathematical, analytical and numerical  
results point to the stability of the solutions 
\cite{whiting,finster,dafermos,andersson,krivan1,krivan2}.
On the other hand, for $\mu \neq 0$, there are a number of analytical 
and numerical results 
pointing in the direction of instability of the solutions 
under certain conditions \cite{damour,detweiler,zouros,furuhashi,khanna,
cardoso,hod}.
\newline
\linebreak
In particular,  
unstable modes were found by the  
numerical investigations by 
Furuhashi and Nambu for $\mu M \sim 1$ and $(a/M) = 0.98$,
by Strafuss and Khanna for $\mu M \sim 1$ 
and $(a/M) = 0.9999$ and by Cardoso and Yoshida for 
$\mu M \leqslant 1$ and $0.98 \leqslant (a/M) < 1$. The 
analytical study by Hod and Hod finds
unstable modes for $\mu M \sim 1$ with a growth rate
which is four orders of 
magnitude larger than previous estimates. On the other hand, 
\cite{beyer1} proves
that the restrictions of 
the solutions of the separated, in the azimuthal coordinate, Klein-Gordon 
field (RKG) are stable for 
\begin{equation} \label{massineq}
\mu \geqslant \frac{|m|a}{2Mr_{+}} 
\sqrt{1 + \frac{2M}{r_{+}} + \frac{a^2}{r_{+}^2}}
\, \, .
\end{equation}
Here
$m \in {\mathbb{Z}}$ is the 
`azimuthal separation parameter' and 
\begin{equation*}
r_{+} := M + 
\sqrt{M^2 - a^2} \, \, .
\end{equation*}
The paper \cite{beyer4}, among others, gives a stronger estimate.
It proves that 
the solutions of the RKG are stable for $\mu$
satisfying 
\begin{equation*} 
\mu \geqslant \frac{|m|a}{2Mr_{+}} 
\sqrt{1 + \frac{2M}{r_{+}}}
\, \, .
\end{equation*}
So far, these have been the only mathematically 
rigorous results on the stability of the solutions of the 
RKG for $\mu > 0$. The first estimate has been confirmed 
numerically by Konoplya and Zhidenko \cite{konoplya}. 
The stronger estimate has been confirmed by S. Hod \cite{hod1}. 
These results 
contradict the result of Zouros and Eardley, 
but are 
consistent with 
the other results above. In addition, the numerical
result by Konoplya and Zhidenko {\it finds no unstable modes 
of the RKG for $\mu M \ll 1$ and $\mu M \sim 1$}, contradicting
above cited analytical and numerical investigations.
\newline
\linebreak
The situation for the RKG for the case 
$\mu > 0$ is quite puzzling. The RKG originates from the 
case $\mu = 0$ by 
the addition of a positive bounded potential term
to the equation. Without the presence of a first order time 
derivative in the equation,
from this alone, 
by help of methods from operator theory, it would be easy 
to prove that the stability of the solutions of the wave equation
implies the stability of the solutions of the 
Klein-Gordon equation for non-vanishing mass. 
Also, the energy estimates in Lemma~$4.7$ of \cite{beyer4}, 
indicate a stabilizing influence of such a term.
On the other hand, so far, there is no result that 
would allow to draw such conclusion. 
The numerical 
results that indicate instability in the case $\mu \neq 0$ 
make quite special assumptions on the values of the rotational 
parameter of the black hole that could indicate numerical 
artifacts. 
Moreover, as mentioned before, the numerical investigation 
by Konoplya 
et al. does not find any unstable 
modes and contradicts all these investigations. Also 
the analytical results in this area are not accompanied by 
error estimates and therefore ultimately 
inconclusive.
\newline
\linebreak
The present paper tries to help clearing up this 
situation. It derives and analyzes the stability 
of a simplified model equation 
from the Klein-Gordon equation, describing 
the propagation of a complex scalar field  
of mass $\mu \geqslant 0$ 
in the gravitational field of a rotating Kerr black hole.
Also, the simplified model equation is in certain
sense ``mathematically controllable.''
The latter term will be given 
meaning in the next section.
\newline
\linebreak
The remainder of the paper is organized as follows.
Section~$2$ gives the derivation of the simplified model
equation.  
Section~$3$ gives 
basic properties of operators read off 
from this approximate equation which 
provide 
the basis for the formulation of an initial-value 
problem along with a stability analysis for that equation 
given in the same section. 
The results of Section~$3$ are used in 
Section~$4$ to show the instability of the solutions 
of the approximate equation for moderately large 
rotation parameters $a$.   
Finally, the paper concludes 
with a discussion of the results and an appendix 
containing the proof of a result that is used  
in the main text. 

\section{Derivation of the Approximate Equation}

The goal of this section is the description of the 
relationship between the Klein-Gordon equation
and the approximate equation. Of course, this relationship 
is crucial for the model character of the latter
equation. In particular, there are various ways to arrive at 
approximate equations suitable for analytical 
treatment. 
Therefore in the following, we describe the derivation 
of the approximate equation in more detail. We mention that 
it is easily seen that our approach is in many aspects different 
from that 
of Detweiler in \cite{detweiler}.
\newline 
\linebreak   
To facilitate analytical treatment, the spectral problem,
that is associated with the approximate equation, is 
required to lead on a 
special case of the Coulomb wave equation. 
The latter can be reduced to an equation of 
confluent hypergeometric type. The solutions of
these equations are well-known. 
\newline
\linebreak
We start from the Klein-Gordon equation, describing 
the propagation of a complex scalar field 
$u \in C^2(\Omega,{\mathbb{C}})$ 
of mass $\mu \geqslant 0$ 
in the gravitational field of a rotating Kerr black hole of 
mass $M > 0$ and with rotational parameter $a \in [0,M]$
in the following form 
\begin{align} \label{kleingordon} 
& \left[1 - \frac{a^2 \triangle}{(r^2+a^2)^2} \, 
\sin^2\theta \right]
\frac{\partial^2 u}{\partial t^2} +  
\frac{4Mar}{(r^2+a^2)^2}\frac{\partial^2 u}{\partial t \partial \varphi} + 
\\
& 
\frac{\triangle}{(r^2+a^2)^2} \cdot 
\left[ - \frac{\partial}{\partial r} \triangle  \frac{\partial}{\partial r}
 - \frac{1}{\sin \theta} \,
\frac{\partial}{\partial \theta} \sin \theta \,\frac{\partial}{\partial \theta}
- \left(
\frac{1}{\sin^2 \theta} - \frac{a^2}{\triangle} \right) \frac{\partial^2}{\partial \varphi^2} 
+ \mu^2 \Sigma \right] u = 0 \, \, . \nonumber
\end{align}
Here\footnote{If not 
otherwise indicated, the symbols 
$t,r,\theta,\varphi$ denote coordinate projections whose 
domains will be obvious from the context. In addition, 
we assume 
the composition 
of maps, which includes addition, multiplication and so forth, 
always to be maximally defined. For instance, the sum of two 
complex-valued maps is defined on the intersection of their domains.
Finally, we use 
Planck units where 
the reduced Planck constant $\hbar$, the 
speed of light in vacuum $c$, 
and the gravitational constant $G$, all have the numerical 
value $1$.},
$(t,r,\theta,\varphi) : 
\Omega \rightarrow {\mathbb{R}}^4$ are the 
Boyer-Lindquist coordinates and
\begin{align*} 
& \Delta := r^2 - 2Mr +a^2 
\, \, , \, \, \Sigma := r^2 + a^2 
\cos^2\!\theta \, \, , \\
& r_{\pm} := M \pm \sqrt{M^2-a^2}
\, \, , \, \, \Omega := {\mathbb{R}} \times (r_{+},\infty) \times
(-\pi,\pi) \times (0,\pi) \, \, .
\end{align*}
The introduction of the new unknown $\bar{u} \in C^2(\Omega,{\mathbb{C}})$ by 
\begin{equation} \label{substitution}
\bar{u} := \triangle^{1/2} u
\end{equation}
and subsequent multiplication of the resulting equation by 
$(r^2+a^2)^2/\triangle^2$ gives
\begin{align*}
& - \frac{\partial^2\bar{u}}{\partial r^2 } + \left\{ 
\left[1 - \frac{a^2 \triangle}{(r^2+a^2)^2} \, 
\sin^2\theta \right] \frac{\partial^2}{\partial t^2}
+ \mu^2
\right\} \bar{u} \\
& + \frac{1}{\triangle} 
\left\{4M(r+M) \left[1 - \frac{a^2 \triangle}{(r^2+a^2)^2} \, 
\sin^2\theta \right] \frac{\partial^2}{\partial t^2} \right. \\
& \left. 
- \frac{1}{\sin \theta} \,
\frac{\partial}{\partial \theta} \sin \theta \,\frac{\partial}{\partial \theta}
 - \frac{1}{\sin^2 \theta} \, \frac{\partial^2}{\partial \varphi^2}  
+ \mu^2 (2Mr - a^2 \sin^2\theta)
\right\} \bar{u} \\
& + \frac{1}{\triangle^2} \left\{ 
4M^2(2Mr-a^2) \left[1 - \frac{a^2 \triangle}{(r^2+a^2)^2} \, 
\sin^2\theta \right] \frac{\partial^2}{\partial t^2} \right. \\
& \left. + 4 M a r \,
\frac{\partial^2 u}{\partial t \partial \varphi} + a^2 \, 
\frac{\partial^2 u}{\partial \varphi^2} - (M^2 - a^2)
\right\} \bar{u} = 0 \, \, .
\end{align*}
In a first step towards the model equation, we neglect the 
term 
\begin{equation*}
\- \frac{a^2 \triangle}{(r^2+a^2)^2} \, 
\sin^2\theta \, \, ,
\end{equation*}
inside the 
factors 
\begin{equation*}
\left[1 - \frac{a^2 \triangle}{(r^2+a^2)^2} \, 
\sin^2\theta \right] \, \, ,
\end{equation*}
multiplying the second order time derivatives of $\bar{u}$. 
In addition, we neglect the term 
\begin{equation*}
-a^2 \sin^2\theta
\end{equation*} 
of the factor
\begin{equation*} 
2Mr - a^2 \sin^2\theta
\end{equation*}
multiplying the square of the mass of the field, $\mu^2$.
Through this, we arrive at an equation whose
angular derivatives and potential 
leave spherical harmonics 
invariant.  
The second step replaces the factor 
\begin{equation*}
\triangle = (r-r_{-})(r-r_{+}) \, \, , 
\end{equation*} 
inside the factors multiplying terms enclosed by curly brackets,
by 
\begin{equation*}
(R-r_{-})(r-r_{+}) \, \, , 
\end{equation*}
where $R \in [r_{+},\infty)$ is a new 
parameter. Finally, the third step replaces the $r$ inside
terms enclosed by curly brackets by the constant map 
of value $R$. Through the last two 
steps, we arrive, after separation of the 
$t$, $\theta$, and $\varphi$
variables,
at a special case of the Coulomb wave equation. 
The latter can be reduced to an equation of 
confluent hypergeometric type. The solutions of
these equations are well-known. Through 
the previous steps,
we arrive at a form of the model equation
\begin{align} \label{modelequation1}
& 0 = - \frac{\partial^2\bar{u}}{\partial r^2} + \left( 
\frac{\partial^2}{\partial t^2}
+ \mu^2
\right) \bar{u} \\
&
+ \frac{1}{R_{-}(r-r_{+})} 
\left[4M(R+M) \, \frac{\partial^2}{\partial t^2} 
- \frac{1}{\sin \theta} \,
\frac{\partial}{\partial \theta} \sin \theta \,
\frac{\partial}{\partial \theta}
 - \frac{1}{\sin^2 \theta} \, \frac{\partial^2}{\partial \varphi^2}  
+ 2 M R \mu^2 
\right] \bar{u} \nonumber \\
& + \frac{1}{R_{-}^2(r-r_{+})^2} \left[
4M^2(2MR - a^2) \, \frac{\partial^2}{\partial t^2} 
+ 4 M a R \,
\frac{\partial^2 u}{\partial t \partial \varphi} + a^2 \, 
\frac{\partial^2 u}{\partial \varphi^2} - (M^2 - a^2)
\right] \bar{u} \, \, , \nonumber
\end{align}
where 
\begin{equation*}
R_{-} := R - r_{-} \, \, .
\end{equation*}
We note that $R_{-} = 0$ if and only if $R=r_{+}$
and $a = M$. For this reason, in the following, we implicitly 
assume for the case $R=r_{+}$ that $a \in [0,M)$.  
Substitution of (\ref{substitution}) into the (\ref{modelequation1})
leads to the model 
equation for (\ref{kleingordon})
\begin{align} \label{modelequation2}
& \left[ 
1 + \frac{4M(R+M)}{R_{-}(r-r_{+})} + 
\frac{4M^2(2MR-a^2)}{R_{-}^2(r-r_{+})^2}
\right] \frac{\partial^2 u}{\partial t^2} + 
\frac{4MaR}{R_{-}^2(r-r_{+})^2}\, 
\frac{\partial^2 u}{\partial t \partial \varphi} \nonumber \\
& + \left\{ 
- \frac{1}{\triangle} \, \frac{\partial}{\partial r} 
\triangle \frac{\partial}{\partial r} + 
\frac{1}{R_{-}(r-r_{+})} \left[ - \frac{1}{\sin \theta} \,
\frac{\partial}{\partial \theta} \sin \theta \,
\frac{\partial}{\partial \theta}
 - \frac{1}{\sin^2 \theta} \, \frac{\partial^2}{\partial \varphi^2}  
+ 2 M R \mu^2 \right] \nonumber
\right. \\
& \left. + \frac{1}{R_{-}^2(r-r_{+})^2} \left[ a^2 
\frac{\partial^2}{\partial \varphi^2} + (M^2-a^2)
\left(\frac{R_{-}^2}{(r-r_{-})^2} - 1 \right) \right] + \mu^2 
\right\} u = 0 \, \, .
\end{align}
We note that 
\begin{align*}
& M(R+M) \geq  M(r_{+}+M) = \frac{1}{2} \, \left( 
r_{+}^2 + a^2 + 2 M^2
\right) > 0 \, \, , \\
& 2M R - a^2 \geq 2 M r_{+} - a^2 = r_{+}^2 
> 0 \, \, .
\end{align*} 
Division of (\ref{modelequation2}) by 
\begin{align*}
& \rho := 
1 + \frac{4M(R+M)}{R_{-}(r-r_{+})} + 
\frac{4M^2(2MR-a^2)}{R_{-}^2(r-r_{+})^2} \\
& = 
\frac{R_{-}^2(r-r_{+})^2 
+ 4M(R+M) R_{-}(r-r_{+}) + 4M^2(2MR-a^2)}{R_{-}^2(r-r_{+})^2} \\
& = \frac{\left[R_{-}(r-r_{+})+2M(R+M)\right]^2
-4M^2(R^2+M^2+a^2)}{R_{-}^2(r-r_{+})^2} > 0
\end{align*}
gives
\begin{align} \label{modelequation3}
&\frac{\partial^2 u}{\partial t^2} + 
\frac{4MaR}{\rho R_{-}^2(r-r_{+})^2}\, 
\frac{\partial^2 u}{\partial t \partial \varphi} \nonumber \\
& + \frac{1}{\rho} \left\{ 
- \frac{1}{\triangle} \, \frac{\partial}{\partial r} 
\triangle \frac{\partial}{\partial r} + 
\frac{1}{R_{-}(r-r_{+})} \left[ - \frac{1}{\sin \theta} \,
\frac{\partial}{\partial \theta} \sin \theta \,
\frac{\partial}{\partial \theta}
 - \frac{1}{\sin^2 \theta} \, \frac{\partial^2}{\partial \varphi^2}  
+ 2 M R \mu^2 \right] \nonumber
\right. \\
& \left. + \frac{1}{R_{-}^2(r-r_{+})^2} \left[ a^2 
\frac{\partial^2}{\partial \varphi^2} + (M^2-a^2)
\left(\frac{R_{-}^2}{(r-r_{-})^2} - 1 \right) \right] + \mu^2 
\right\} u = 0 \, \, .
\end{align}

Substitution of the ansatz  
\begin{equation*}
u(t,r,\theta,\varphi) = v(t,r) Y_{lm}(\theta,\varphi) \, \, , \, \, 
\end{equation*}
where $v \in C^2({\mathbb{R}} \times I,{\mathbb{C}})$, $I := (r_{+},\infty)$, 
$m \in {\mathbb{Z}}$ and $l \in 
\{ |m|,|m|+1,\dots\}$, $c_l := l(l+1)$ 
and $Y_{lm}$
the spherical harmonics corresponding to $m$ and $l$, 
into (\ref{modelequation3}) gives the reduced equation

\begin{align} \label{modelequation4}
&\frac{\partial^2 v}{\partial t^2} + 
i b \,
\frac{\partial v}{\partial t} +
\left(
- \frac{1}{\rho \triangle} \, \frac{\partial}{\partial r} 
\triangle \frac{\partial}{\partial r} + V \right) v = 0 
\, \, . 
\end{align}
Here 
\begin{align*} 
& b := 
\frac{4maMR}{\rho R_{-}^2(r-r_{+})^2} 
= \frac{4maMR}{R_{-}^2(r-r_{+})^2 
+ 4M(R+M) R_{-}(r-r_{+}) + 4M^2(2MR-a^2)} \nonumber \\
& V := \frac{1}{\rho} \left\{\mu^2 +
\frac{c_l  
+ 2 M R \mu^2 }{R_{-}(r-r_{+})} 
- \frac{
(M^2-a^2)
\left[1 - \frac{R_{-}^2}{(r-r_{-})^2}\right] +
 m^2 a^2}{R_{-}^2(r-r_{+})^2} 
\right\} = \\
& \frac{\mu^2 
R_{-}^2(r-r_{+})^2+(c_{l}  
+ 2 M R \mu^2)R_{-}(r-r_{+}) - \left\{
(M^2-a^2)
\left[1 - \frac{R_{-}^2}{(r-r_{-})^2}\right] + m^2 a^2 
\right\}}{R_{-}^2(r-r_{+})^2 
+ 4M(R+M) R_{-}(r-r_{+}) + 4M^2(2MR-a^2)}  
\, \, . \nonumber
\end{align*}
We note in particular that $b$ and
$V$ are bounded continuous 
real-valued functions that have continuous extensions to 
$\bar{I}$.
We define 
\begin{equation*} 
V_0 := \inf_{r \in I} V 
\end{equation*}
and note that 
\begin{equation*}
V_0 \geq 
- \, \frac{m^2 a^2 + M^2 -a^2}{4M^2(2MR-a^2)} \, \, .
\end{equation*}

\begin{figure} 
\centering
\includegraphics[width=5.6cm,height=5.6cm]{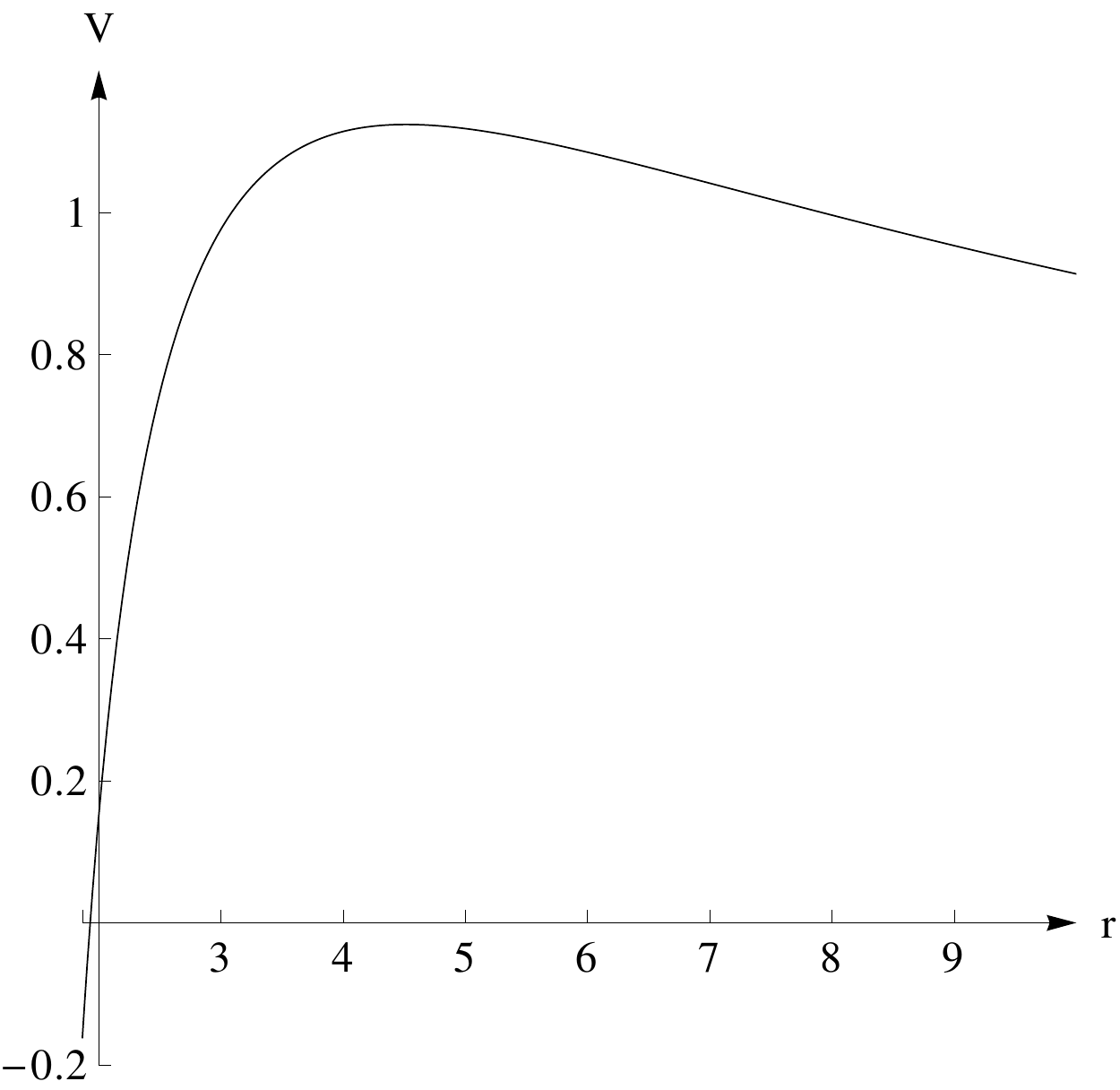}
\caption{Sketch of $\textrm{Graph}(V)$ from 
$r_{+}$ to $10$,
for $M=1$, 
$a=1/2$, $m=3$, $R = r_{+}$, $\mu = 0.5$ and $l = 20$.}
\label{figV}
\end{figure}

\section{Formulation of an Initial Value Problem
for the Reduced Equation}

This section gives 
basic properties of operators read off 
from the approximate equation (\ref{modelequation4}) 
as well as  
the formulation of an initial-value 
problem along with a stability analysis for the latter, 
following \cite{beyer,beyer2}.
\newline
\linebreak 
As data space for (\ref{modelequation4}), we choose the 
weighted $L^2$-space
\begin{equation*}
X := L^2_{\mathbb{C}}(I,\rho \triangle) \, \, . 
\end{equation*}
Since 
\begin{equation*}
\rho \, \triangle = \triangle + \frac{4M(R+M)}{R_{-}} \, (r-r_{-}) + 
\frac{4M^2(2MR-a^2)(r-r_{-})}{R_{-}^2(r-r_{+})} \, \, ,
\end{equation*}
this implies, in particular, that all continuous functions 
that have continuous extensions to $\bar{I}$ which are 
non-vanishing on $r_{+}$ are not part of the data space.

\begin{definition}
We define $B$ as maximal multiplication operator in $X$ 
corresponding to $b$.
\end{definition}

\begin{lemma}
$B$ is a bounded linear 
and self-adjoint operator on $X$. $B$ is positive 
for $m \geq 0$ and negative for $m \leq 0$.
\end{lemma}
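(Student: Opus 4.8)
The plan is to treat $B$ as the standard multiplication operator by the function $b$ on the weighted Hilbert space $X = L^2_{\mathbb{C}}(I,\rho\triangle)$, and to read off all three assertions from the elementary calculus of multiplication operators on $L^2$-spaces, using only the sign and boundedness information already recorded for $b$. First I would dispose of boundedness. Since $b$ is bounded on $I$, for every $f \in X$ one has
$$
\int_I |bf|^2\, \rho\triangle\, dr \leq \Big(\sup_{r \in I}|b|\Big)^2 \int_I |f|^2\, \rho\triangle\, dr < \infty,
$$
so $bf \in X$. Hence the maximal multiplication operator $B$ has domain all of $X$, is everywhere defined, and satisfies $\|Bf\|_X \leq \big(\sup_{r\in I}|b|\big)\,\|f\|_X$; this yields boundedness with $\|B\| \leq \sup_{r\in I}|b|$.

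Self-adjointness follows from the real-valuedness of $b$. For all $f,g \in X$,
$$
\langle Bf,g\rangle = \int_I b\, f\, \overline{g}\, \rho\triangle\, dr = \int_I f\, \overline{b\, g}\, \rho\triangle\, dr = \langle f, Bg\rangle,
$$
so $B$ is symmetric; being in addition bounded and everywhere defined, it is self-adjoint.

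For the sign statement, the key observation is that the factor multiplying $m$ in $b$, namely
$$
\frac{4aMR}{\rho\, R_{-}^2 (r-r_{+})^2},
$$
is nonnegative throughout $I$: indeed $a \geq 0$, $M > 0$ and $R \geq r_{+} > 0$, the denominator $R_{-}^2(r-r_{+})^2$ is strictly positive on $I$ (recall $R_{-} > 0$ under the standing assumption excluding $R=r_{+}$, $a=M$, while $r-r_{+} > 0$ for $r \in I$), and $\rho > 0$ was shown above. Consequently $b \geq 0$ when $m \geq 0$ and $b \leq 0$ when $m \leq 0$, so that for every $f \in X$,
$$
\langle Bf,f\rangle = \int_I b\, |f|^2\, \rho\triangle\, dr
$$
is $\geq 0$ in the first case and $\leq 0$ in the second, giving positivity for $m \geq 0$ and negativity for $m \leq 0$.

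I do not expect any genuine obstacle: the lemma is a direct application of the $L^2$ multiplication-operator calculus. The only points needing minor care are the verification that each factor in the prefactor of $b$ has a definite sign on $I$ — in particular the strict positivity of $R_{-}$, which relies on the convention $a \in [0,M)$ in the case $R=r_{+}$ — and the degenerate case $a=0$, where $b \equiv 0$ and $B=0$ is simultaneously positive and negative, consistent with the overlap of $m \geq 0$ and $m \leq 0$ at $m=0$.
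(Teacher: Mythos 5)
Your proposal is correct and follows essentially the same route as the paper: the paper likewise observes that $b$ is bounded and real-valued to conclude that the maximal multiplication operator $B$ is bounded, everywhere defined and self-adjoint, and then reads off the sign of $B$ from the sign of $b$ as determined by $m$. Your write-up merely makes explicit the positivity of the prefactor $4aMR/(\rho R_{-}^2(r-r_{+})^2)$ and the degenerate case $m=0$ (or $a=0$), which the paper leaves implicit.
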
 

\begin{proof}
Since $b$ is in 
particular  bounded an real-valued, $B$ is a bounded linear 
and self-adjoint operator on $X$. Further, since $b$ is positive, negative for $m \geq 0$ and $m \leq 0$, respectively, 
$B$ is positive, negative for $m \geq 0$ and $m \leq 0$, 
respectively.
\end{proof}

\begin{definition}
We define the linear operator $A_0$ in 
$X$ by 
\begin{equation*}
A_0 f :=
- \frac{1}{\rho \triangle} \, (\triangle f^{\, \prime})^{\, \prime} + V f 
\end{equation*}
for every $f \in C^2_0(I,{\mathbb{R}})$.
\end{definition}
 
\begin{lemma} \label{propertiesofA0}
$A_0$ is a densely-defined, linear, symmetric, 
semibounded and 
essentially self-adjoint operator in $X$ with lower 
bound $V_0$.
\end{lemma}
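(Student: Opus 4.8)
The plan is to prove the four assertions—dense definition, symmetry, semiboundedness with lower bound $V_0$, and essential self-adjointness—by recognizing $A_0$ as a perturbed singular Sturm--Liouville operator on the half-line $I=(r_+,\infty)$ and invoking the classical Weyl limit-point/limit-circle theory. First I would record that $C_0^2(I,\mathbb{R})$ is dense in $X=L^2_{\mathbb{C}}(I,\rho\triangle)$; this is routine since compactly supported smooth functions are dense in any weighted $L^2$-space whose weight $\rho\triangle$ is locally integrable and positive on the open interval, which follows from the explicit formula for $\rho\triangle$ given just above together with the stated positivity $\rho>0$ and $\triangle>0$ on $I$.

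For symmetry I would integrate by parts twice. Writing $\langle f,g\rangle=\int_I \overline{f}\,g\,\rho\triangle\,dr$ and using the Sturm--Liouville form $A_0 f=-(\rho\triangle)^{-1}(\triangle f')'+Vf$, the weight $\rho\triangle$ against the $-(\triangle f')'$ term produces exactly $\int_I \overline{(\triangle f')'}\,g\,dr$ up to sign, so the second-order part is formally self-adjoint with respect to this weight; the boundary terms vanish because $f,g$ have compact support in the open interval $I$, and the multiplication by the real function $V$ is manifestly symmetric. This gives symmetry of $A_0$ on $C_0^2(I,\mathbb{R})$.

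Semiboundedness with lower bound $V_0$ is the step I would handle next and it is essentially an energy estimate. For $f\in C_0^2(I,\mathbb{R})$ one integrates by parts to get
\begin{equation*}
\langle f,A_0 f\rangle=\int_I \triangle\,|f'|^2\,dr+\int_I V\,|f|^2\,\rho\triangle\,dr\geq V_0\int_I |f|^2\,\rho\triangle\,dr=V_0\,\|f\|^2,
\end{equation*}
using $\triangle>0$ to discard the kinetic term and $V\geq V_0$ by definition of $V_0$ together with the explicit bound on $V_0$ stated in the excerpt to guarantee $V_0$ is finite. This shows $A_0\geq V_0$.

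The main obstacle, and the crux of the argument, is essential self-adjointness, which by the Weyl theory reduces to checking that the formal differential expression is in the limit-point case at \emph{both} singular endpoints $r_+$ and $+\infty$. At $r=r_+$ the coefficients develop singularities of the form $(r-r_+)^{-1}$ and $(r-r_+)^{-2}$ in $V$, and the weight $\rho\triangle$ degenerates there, so I would analyze the indicial behavior of the equation $A_0 f=\lambda f$ near $r_+$—comparing against the model $-(r-r_+)^{-2}$ potential and checking that the limiting exponent forbids both solutions from being square-integrable against $\rho\triangle$ near $r_+$. At infinity the potential $V$ tends to the finite constant $\mu^2/\rho_\infty$ and the leading operator behaves like $-d^2/dr^2$ on a space with weight growing like $r^2$, which is a standard limit-point situation. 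I expect the endpoint $r_+$ to be the delicate one: the precise coefficient of the $(r-r_+)^{-2}$ singularity must be large enough (at least $3/4$ in the appropriately normalized variable) to secure the limit-point classification, and verifying this from the explicit expressions for $\rho$ and $V$ is where the real work lies—most likely this is the result relegated to the appendix referenced in the introduction. Once both endpoints are limit-point, von Neumann's criterion gives deficiency indices $(0,0)$, hence essential self-adjointness, completing the proof.
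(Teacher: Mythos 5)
Your treatment of density, symmetry, and semiboundedness is fine and coincides with the paper's (the lower bound $V_0$ is exactly the integration-by-parts estimate you write). The gap is in the essential self-adjointness step, which you correctly identify as the crux but then mislocate and do not carry out. You assume that $V$ has genuine $(r-r_+)^{-1}$ and $(r-r_+)^{-2}$ singularities at $r_+$ and that the decisive issue is whether the coefficient of the inverse-square term meets the Hardy-type threshold $3/4$. But $V$ is \emph{bounded}: the apparent singularities in the braces are cancelled by the overall factor $1/\rho$, since $\rho$ itself blows up like $(r-r_+)^{-2}$, and the paper states explicitly that $b$ and $V$ extend continuously to $\bar I$. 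So there is no inverse-square potential to estimate, and the $3/4$ criterion is not the relevant mechanism. The singular structure at $r_+$ sits instead in the weight $\rho\triangle\sim c\,(r-r_+)^{-1}$ and in the degenerating leading coefficient $\triangle\sim c'(r-r_+)$. Moreover, the verification you defer to ``the appendix'' is not there: the appendix proves only the conformal-map Lemma used in the mode analysis, so the key step of your argument is never actually established.

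The paper's route around this is worth noting. It conjugates by the unitary $U=(\rho\triangle)^{-1/2}$ to pass to unweighted $L^2_{\mathbb C}(I)$, where $A_0$ becomes $-(f'/\rho)'+(V+V_U)f$ with $V+V_U$ bounded. The bounded potential is then removable by Rellich--Kato, and the limit-point classification at both endpoints reduces to the explicitly solvable equation $-(f'/\rho)'=0$, which has the solution
\begin{equation*}
f = r-r_+ + \frac{4M(R+M)}{R_-}\,\ln(r-r_+) - \frac{4M^2(2MR-a^2)}{R_-^2(r-r_+)}
\end{equation*}
(an antiderivative of $\rho$), failing to be square integrable at $r_+$ and at $\infty$; one non-$L^2$ solution at each end gives limit point there, hence deficiency indices $(0,0)$. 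A direct Frobenius analysis of $A_0f=\lambda f$ in the weighted space at $r_+$ could also be made to work, but it would have to be aimed at the weight and the degenerate leading coefficient rather than at the strength of a potential singularity; as written, your proposal leaves the decisive endpoint classification unproven.
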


\begin{proof}
First, we note that by 
\begin{equation*}
U f := (\rho \triangle)^{-1/2} f
\end{equation*}
for every $f \in L^2_{\mathbb{C}}(I)$, there is defined 
a Hilbert space isomorphism 
\begin{equation*}
U : L^2_{\mathbb{C}}(I) \rightarrow
L^2_{\mathbb{C}}(I,\rho \triangle) \, \, . 
\end{equation*}
In particular, the operator 
\begin{equation*}
A_{0U} := U ^{-1} A_0 U
\end{equation*} 
in 
$L^2_{\mathbb{C}}(I)$
has the domain  
$C^2_0(I,{\mathbb{R}})$ and is given by 
\begin{equation*}
A_{0U} f =  
- (f^{\, \prime}/\rho)^{\, \prime} + 
(V + V_U) f 
\end{equation*}
for every $f \in C^2_0(I,{\mathbb{R}})$, where
\begin{align*}
& V_U := 
- \frac{1}{4} \, ( \triangle^{\prime \, 2} + 2 \triangle 
\triangle^{\prime \prime}) F^{-1} + \frac{\triangle}{2}
\, (2 \triangle^{\prime} F^{\, \prime} + \triangle 
F^{\, \prime \prime}) F^{-2} - \frac{3}{4}\, \triangle^2
 F^{\, \prime \, 2} F^{-3} \, \, , \\
& F := \rho \triangle^2 \, \, . 
\end{align*}
We note that $V_U$ is in particular bounded. 
Hence it follows also that 
$V + V_U$ is bounded. 
We define the auxiliary operator $A_{0UT}$ in 
$L^2_{\mathbb{C}}(I)$ by 
\begin{equation*}
A_{0UT} f := - (f^{\, \prime}/\rho)^{\, \prime}
\end{equation*}
for every $f \in C^2_0(I,{\mathbb{C}})$. 
Since $C^2_0(I,{\mathbb{C}})$ is dense in 
$L^2_{\mathbb{C}}(I)$, 
$A_{0UT}$ and $A_{0U}$ are densely-defined. Also, 
$A_{0UT}$ and $A_{0U}$ are 
obviously linear. Further by use of integration 
by parts along with the boundedness of $V + V_U$, it 
follows that $A_{0UT}$, $A_{0U}$ are symmetric and 
semibounded from below. The equation
\begin{equation*} 
- (f^{\, \prime}/\rho)^{\, \prime}=0
\end{equation*}
admits the $C^2$-solution 
\begin{equation*}
f := 
r - r_{+}  + \frac{4M(R+M)}{R_{-}} \ln(r-r_{+}) - 
\frac{4M^2(2MR-a^2)}{R_{-}^2(r-r_{+})}  \, \, .
\end{equation*}
Since $f$ is not in $L^2_{\mathbb{C}}(I)$
at both ends of $I$, it follows that $A_{0UT}$ is in the 
limit point case at $r_{+}$ and at $+\infty$. Hence 
$A_{0UT}$
is essentially self-adjoint (see, e.g., \cite{weidmann}). 
Further, 
since $V + V_U$ is bounded and real-valued, 
it follows from the Rellich-Kato theorem, 
e.g., see Theorem X.12 
in Volume II of \cite{reedsimon}, that $A_{0U}$ is 
also essentially 
self-adjoint and that the domains of the closures 
$\bar{A}_{0UT}$
of $A_{0UT}$ and $\bar{A}_{0U}$ of 
$A_{0U}$ coincide. Hence, we conclude that 
$A_0$ is a densely-defined, linear, symmetric, 
semibounded and 
essentially self-adjoint operator in $X$. Finally, it follows
from integration by parts that 
$V_0$ is a lower bound for $A_0$.
\end{proof}

\begin{definition}
In the following, we set $A := \bar{A}_0$. 
\end{definition}

From Lemma~\ref{propertiesofA0}, it follows that  
\begin{theorem}
$A$ is a densely-defined, linear, symmetric, 
semibounded and self-adjoint operator in $X$ with lower 
bound $V_0$.
\end{theorem}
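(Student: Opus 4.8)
The plan is to read the theorem off almost directly from Lemma~\ref{propertiesofA0}, since by the preceding definition $A$ is nothing but the closure $\bar{A}_0$ of $A_0$. The essential self-adjointness of $A_0$ established there means, by definition, that its closure $\bar{A}_0 = A$ is self-adjoint; so the self-adjointness of $A$ requires no further work and is the first thing I would record.

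Next I would transfer the remaining structural properties from $A_0$ to its closure. Linearity is immediate, since the closure of a linear operator is again linear. A self-adjoint operator is in particular densely defined and symmetric, so the assertions that $A$ is densely defined and symmetric follow at once from the self-adjointness just noted. Thus everything except the quantitative lower bound is purely formal.

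The one point deserving an explicit argument is that the lower bound $V_0$ survives passage to the closure. Here I would use a routine limiting argument: given $g$ in the domain of $A$, choose a sequence $(f_n)$ in the domain $C^2_0(I,{\mathbb{R}})$ of $A_0$ with $f_n \to g$ and $A_0 f_n \to A g$ in $X$; the estimate $\langle f_n, A_0 f_n \rangle \geq V_0 \, \| f_n \|^2$, valid for each $n$ because $V_0$ is a lower bound for $A_0$, then passes to the limit by continuity of the inner product and of the norm, yielding $\langle g, A g \rangle \geq V_0 \, \| g \|^2$. Hence $V_0$ is a lower bound for $A$ and $A$ is semibounded from below.

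The main obstacle is essentially nil: the entire content has been front-loaded into Lemma~\ref{propertiesofA0}, and the theorem is its immediate corollary, combined only with the elementary fact that semiboundedness with a given lower bound is inherited by the closure. I would therefore expect the proof to be a single short paragraph invoking the lemma and closing with the limiting estimate above.
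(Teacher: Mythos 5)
Your proposal is correct and matches the paper exactly: the paper gives no separate proof, stating only that the theorem follows from Lemma~\ref{propertiesofA0}, which is precisely your reading (essential self-adjointness of $A_0$ gives self-adjointness of $A=\bar{A}_0$, and the lower bound $V_0$ passes to the closure by the standard limiting argument you describe).
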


Further, from the 
proof of Lemma~\ref{propertiesofA0}, 
we can conclude the following, see, e.g., \cite{weidmann}.

\begin{corollary} \label{morepropertiesofA0}
\begin{itemize}
\item[]
\item[(i)]
For every $z \in 
{\mathbb{C}} \setminus ({\mathbb{R}} \times \{0\})$,
$f \in L^2_{\mathbb{C}}(I,\rho \triangle)$
and $r \in I$, 
\begin{align*}
[(A - z)^{-1}f](r)  =  
& - v_2(r) \int_{r_{+}}^{r} v_1(r^{\,\prime}) 
(\rho \triangle f)(r^{\,\prime}) 
dr^{\,\prime} \\
& - 
v_1(r) \int_{r}^{\infty} v_2(r^{\,\prime}) 
(\rho \triangle f)(r^{\,\prime}) dr^{\,\prime} \, \, , 
\end{align*}
where
$v_1,v_2$ are existing $C^2$-functions satisfying 
\begin{align*}
& - \frac{1}{\rho \triangle} \, (\triangle 
v_1^{\, \prime})^{\, \prime} + (V-z) v_1 =  
- \frac{1}{\rho \triangle} \, (\triangle 
v_2^{\, \prime})^{\, \prime} + (V - z) v_2  = 0 \, \, , \\
& v_1|_{(r_{+},c]} \in L^2_{\mathbb{C}}((r_{+},c],\rho 
\triangle)
\, \, , \, \,  v_2|_{[c,\infty)} 
\in L^2_{\mathbb{C}}([c,\infty),\rho \triangle) \, \, , \\
& [\triangle(v_1 v_2^{\, \prime} - 
v_1^{\, \prime} v_2)](c) = 1  \, \, .
\end{align*}    
for some $c \in I$.
\item[(ii)] By 
\begin{equation*}
A_{0*}v := - \frac{1}{\rho \triangle} \, (\triangle 
v^{\, \prime})^{\, \prime} + V v 
\end{equation*}
for every 
\begin{align*}
v \in & \, \{f \in C^2(I,{\mathbb{C}}) \cap L^2_{\mathbb{C}}(I,\rho 
\triangle): - \frac{1}{\rho \triangle} \, (\triangle 
f^{\, \prime})^{\, \prime} + V f \in L^2_{\mathbb{C}}(I,\rho 
\triangle)\} \\
= & \, \{f \in C^2(I,{\mathbb{C}}) \cap L^2_{\mathbb{C}}(I,\rho 
\triangle): - \frac{1}{\rho \triangle} \, (\triangle 
f^{\, \prime})^{\, \prime} \in L^2_{\mathbb{C}}(I,\rho 
\triangle)\} \, \, , 
\end{align*}
there is defined a linear 
extension of $A_0$, whose closure coincides
with $A$.
\end{itemize}
\end{corollary}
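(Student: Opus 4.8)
The plan is to read both statements off the classical Weyl--Titchmarsh theory for the singular Sturm--Liouville expression
\[
\tau f := - \frac{1}{\rho \triangle}\,(\triangle f^{\,\prime})^{\,\prime} + V f,
\]
acting in $X = L^2_{\mathbb{C}}(I,\rho\triangle)$ with momentum $\triangle$ and weight $\rho\triangle$. The single fact driving everything is already secured in the proof of Lemma~\ref{propertiesofA0}: through the unitary map $U$ and the Rellich--Kato theorem, $\tau$ is in the \emph{limit point case at both endpoints} $r_{+}$ and $+\infty$, so that $A=\bar A_0$ is the unique self-adjoint realization and coincides with the maximal operator $T_{\max}$ of $\tau$. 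I would cite \cite{weidmann} for this dictionary and build the two claims on top of it.

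For part~(ii), I would first observe that the two descriptions of the domain coincide precisely because $V$ is bounded and real-valued: for $f\in X$ one has $Vf\in X$, hence $\tau f\in X$ if and only if $-\frac{1}{\rho\triangle}(\triangle f^{\,\prime})^{\,\prime}\in X$. Next, $A_{0*}$ plainly extends $A_0$, since $C^2_0(I,\mathbb{R})$ is contained in the listed domain and the two prescriptions agree there. Finally, because $\tau$ is limit point at both ends, $T_{\max}=T_{\min}=\bar A_0=A$; as $A_{0*}$ is a restriction of $T_{\max}$, the chain $A_0\subseteq A_{0*}\subseteq A$ holds, and taking closures squeezes $A=\bar A_0\subseteq\bar A_{0*}\subseteq\bar A=A$, so that $\bar A_{0*}=A$.

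For part~(i), I would fix $z\in\mathbb{C}\setminus(\mathbb{R}\times\{0\})$, which lies in the resolvent set since $A$ is self-adjoint with real spectrum. The limit point classification furnishes, up to a scalar, a unique solution $v_1$ of $(\tau-z)v=0$ that is square-integrable near $r_{+}$ and a unique such $v_2$ near $+\infty$; they are linearly independent, for otherwise a common multiple would be a genuine $L^2$-eigenfunction for the non-real $z$. The modified Wronskian $W:=\triangle(v_1 v_2^{\,\prime}-v_1^{\,\prime}v_2)$ is constant in $r$, since differentiating and substituting $(\triangle v_i^{\,\prime})^{\,\prime}=\rho\triangle(V-z)v_i$ annihilates $W^{\,\prime}$; independence forces $W\neq 0$, and the condition $[\triangle(v_1 v_2^{\,\prime}-v_1^{\,\prime}v_2)](c)=1$ normalizes $W\equiv 1$. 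Writing $G$ for the integral operator on the right-hand side of the claimed formula, a two-fold differentiation shows that the interior boundary terms cancel at first order and collapse at second order to $-\rho\triangle f\cdot W=-\rho\triangle f$; using $(\triangle v_i^{\,\prime})^{\,\prime}=\rho\triangle(V-z)v_i$ once more then yields $(\tau-z)Gf=f$ pointwise on $I$.

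The main obstacle is the functional-analytic bookkeeping that promotes this formal right inverse to the genuine resolvent. First I would verify that the two integrals converge for every $r$: near $r_{+}$ the Cauchy--Schwarz inequality in the weighted inner product bounds $\int_{r_{+}}^{r} v_1(\rho\triangle f)\,dr^{\,\prime}$ by $\|v_1\|_{L^2((r_{+},c],\rho\triangle)}\,\|f\|_X$, and symmetrically near $+\infty$ using $v_2\in L^2([c,\infty),\rho\triangle)$. The delicate point is showing $Gf\in X$, i.e. that $G$ is bounded on $X$; this is exactly the content of the Weyl--Titchmarsh construction of the resolvent in the limit point/limit point case, which I would invoke from \cite{weidmann} rather than re-derive, the subtlety being the interplay near $r_{+}$ between the non-$L^2$ solution and the vanishing of the integral factor. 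Once $Gf\in X$ is known, the identity $(\tau-z)Gf=f$ gives $\tau(Gf)=f+zGf\in X$, so $Gf\in\operatorname{dom}(T_{\max})=\operatorname{dom}(A)$ and $(A-z)Gf=f$; since $A-z$ is a bijection of $\operatorname{dom}(A)$ onto $X$, this right inverse must equal $(A-z)^{-1}$, which establishes the stated formula.
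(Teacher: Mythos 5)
Your proposal is correct and follows essentially the same route as the paper, which offers no written proof at all but simply asserts that the corollary follows from the limit-point classification at both endpoints established in the proof of Lemma~3.4 together with the standard Weyl--Titchmarsh theory cited from Weidmann. You reconstruct exactly that argument (limit point at $r_{+}$ and $+\infty$ $\Rightarrow$ $A=T_{\max}$, the squeeze $A_0\subseteq A_{0*}\subseteq T_{\max}$ for (ii), and the Green's-function/variation-of-parameters verification for (i)), supplying more detail than the paper itself.
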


As a consequence of the self-adjointness of $A$ 
and semiboundedness of 
$A$ with lower bound $V_0$,
the objects $X, A_{-V_0+\varepsilon} := A 
- V_0 + \varepsilon,B$ and 
$C : - (- V_0 + \varepsilon)$ are easily seen 
to satisfy Assumptions~$1$ 
and $4$ of \cite{beyer}.\footnote{See also the Section~$5.1$ 
on `Damped wave equations' in \cite{beyer2}.} Here 
$\varepsilon > 0$
is assumed to have the dimension $l^{-2}$. The exact value 
of 
$\varepsilon$ does not affect subsequent results in any 
essential way. 
Application of the results of \cite{beyer} 
give, in particular, the following well-posed formulation of the 
initial value problem for (\ref{modelequation4}).

\begin{theorem} ({\bf Formulation of an initial value problem
for the reduced equation})
\begin{itemize}
\item[(i)] By
\begin{equation*} 
Y := D(A_{-V_0+\varepsilon}^{1/2}) \times X 
\end{equation*}
and 
\begin{equation*}
(\xi | \eta) := \braket{A_{-V_0+\varepsilon}^{1/2}\,\xi_1|
A_{-V_0+\varepsilon}^{1/2}\,\eta_1} + \braket{\xi_2|\eta_2}
\end{equation*}
for all $ \xi = (\xi_1,\xi_2), \eta = (\eta_1,\eta_2) \in Y$,
there is defined a complex Hilbert space $(Y,(\, |\, ))$. 
\item[(ii)] The operators
$G$ and $-G$ defined by 
\begin{equation*}
G(\xi,\eta) := (-\eta, A \, \xi + iB\eta) \, \, 
\end{equation*}
for all $\xi \in D(A)$ and $\eta \in 
D(A_{-V_0+\varepsilon}^{1/2})$
are infinitesimal generators of strongly continuous semigroups
$T_{+} : [0,\infty) \rightarrow L(Y,Y)$ and 
$T_{-} : [0,\infty) \rightarrow L(Y,Y)$, respectively. 
\item[(iii)]
For every $t_0 \in {\mathbb{R}}$ and every $\xi \in  
D(A) \times D(A_{-V_0+\varepsilon}^{1/2})$, 
there is a uniquely determined 
differentiable map $u : {\mathbb{R}} \rightarrow Y$
such that 
\begin{equation*}
u(t_0) = \xi
\end{equation*}
and 
\begin{equation} \label{eveq}
u^{\prime}(t) = - G u(t) 
\end{equation}
for all $t \in {\mathbb{R}}$. Here $\, ^{\prime}$ denotes differentiation 
of functions assuming values in $Y$. Moreover, this
$u$ is given by 
\begin{equation} \label{rep}
u(t) :=     
 \left\{
 \begin{array}{cl}
 T_{+}(t)\xi & \text{for $t \geqslant 0$} \\
 T_{-}(-t)\xi & \text{for $t < 0$}
 \end{array}
 \right.  
\end{equation}
for all $t \in {\mathbb{R}}$.
\item[(iv)]
For all $t \in [0,\infty)$:
\begin{equation*} 
|T_{\pm}(t)| \leqslant \exp(\|C\| \, 
\|A_{-V_0+\varepsilon}^{-1/2}\|t) \, \, \, ,
\end{equation*} 
where $|\,|$, $\| \, \|$ denote the operator norm for $L(Y,Y)$ and
$L(X,X)$, respectively.
\item[(v)] 
$\lambda \in {\mathbb{C}}$ is a spectral value, eigenvalue
of $iG$ if and only if
\begin{equation*}
A - \lambda B - \lambda^2 
\end{equation*}
is not bijective and not injective, respectively.
\item[(vi)] For any $\lambda$ from the resolvent set of $iG$ and any 
$\eta = (\eta_1,\eta_2) \in Y$ one has:
\begin{equation*}
(iG - \lambda)^{-1} \eta = \left(\xi,i(\lambda \xi + \eta_1)\right)
\, \, ,
\end{equation*} 
where 
\begin{equation*}
\xi := (A - \lambda B - \lambda^2)^{-1} \left[ (B+\lambda)\eta_1 
- i \eta_2\right] \, \, . 
\end{equation*}
\end{itemize}
\end{theorem}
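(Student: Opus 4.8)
The plan is to read (\ref{modelequation4}) as a concrete instance of the abstract damped wave equation $u'' + iBu' + Au = 0$ and to extract all six assertions from the general semigroup theory of \cite{beyer}. The first step is to confirm, as the text asserts, that the data $(X, A_{-V_0+\varepsilon}, B, C)$ satisfy Assumptions~$1$ and $4$ of that reference. Self-adjointness of $A$ together with its lower bound $V_0$ (the preceding theorem) makes $A_{-V_0+\varepsilon} = A - V_0 + \varepsilon$ self-adjoint and strictly positive with lower bound $\varepsilon > 0$; hence $A_{-V_0+\varepsilon}^{-1}$ and $A_{-V_0+\varepsilon}^{-1/2}$ exist as everywhere-defined bounded operators. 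Combined with the boundedness and self-adjointness of $B$ and the fact that $C = -(-V_0+\varepsilon)$ acts as a bounded scalar multiplication, the structural hypotheses hold.

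For (i) I would observe that $A_{-V_0+\varepsilon}^{1/2}$ is positive, self-adjoint and injective with bounded inverse, so the sesquilinear form $(\xi\,|\,\eta)$ is genuinely positive definite. Completeness of $(Y,(\,|\,))$ follows since $D(A_{-V_0+\varepsilon}^{1/2})$ is complete in the graph norm of the closed operator $A_{-V_0+\varepsilon}^{1/2}$, and on that domain the graph norm is equivalent to $\|A_{-V_0+\varepsilon}^{1/2}\cdot\|$ precisely because the inverse is bounded; the product with $X$ is then plainly a Hilbert space.

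Parts (ii)--(iv) are the analytic heart, and I would treat them by a bounded perturbation of the conservative system obtained from the splitting $A = A_{-V_0+\varepsilon} + C$. The unperturbed generator $G_0(\xi,\eta) := (-\eta, A_{-V_0+\varepsilon}\xi)$ is skew-adjoint for $(\,|\,)$ and therefore, by Stone's theorem, generates a unitary group on $Y$: this is the abstract wave equation in energy variables. Writing $G = G_0 + P$ with $P(\xi,\eta) = (0, C\xi + iB\eta)$, one verifies $P \in L(Y,Y)$, the term $iB\eta$ being controlled on the second factor by $\|B\|$ and the term $C\xi$ via $\|\xi\|_X \leq \|A_{-V_0+\varepsilon}^{-1/2}\|\,\|A_{-V_0+\varepsilon}^{1/2}\xi\|_X$; the bounded perturbation theorem then makes both $G$ and $-G$ infinitesimal generators of strongly continuous semigroups. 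The sharp estimate (iv) I would obtain by differentiating the energy, $\frac{d}{dt}\|u\|_Y^2 = 2\,\mathrm{Re}(-Gu\,|\,u)$, in which the $G_0$-contribution is purely imaginary by skew-adjointness and the $iB$-contribution is purely imaginary because $B$ is self-adjoint, so that only the scalar shift $C$ survives and gives $\frac{d}{dt}\|u\|_Y^2 \leq 2\|C\|\,\|A_{-V_0+\varepsilon}^{-1/2}\|\,\|u\|_Y^2$, hence the stated bound. Part (iii) is then the standard semigroup statement that for data in $D(G) = D(A) \times D(A_{-V_0+\varepsilon}^{1/2})$ the Cauchy problem (\ref{eveq}) has a unique differentiable solution, assembled into the two-sided formula (\ref{rep}) from $T_{+}$ and $T_{-}$.

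Finally, (v) and (vi) reduce algebraically to the quadratic operator pencil $A - \lambda B - \lambda^2$. Writing the sought element of $Y$ as $(\xi,\zeta)$ and solving $(iG - \lambda)(\xi,\zeta) = (\eta_1,\eta_2)$ componentwise, the first component forces $\zeta = i(\lambda\xi + \eta_1)$, and inserting this into the second collapses the system to $(A - \lambda B - \lambda^2)\xi = (B+\lambda)\eta_1 - i\eta_2$. This identifies the bijectivity, respectively injectivity in the homogeneous case $\eta_1 = \eta_2 = 0$, of $iG - \lambda$ with that of the pencil, proving (v), and simultaneously exhibits the explicit resolvent of (vi). I expect the main obstacle to be bookkeeping rather than depth: checking carefully that $P$ maps $Y$ into $Y$, and tracking the unbounded domains of $A$ and $A_{-V_0+\varepsilon}^{1/2}$ when passing between $iG-\lambda$ and the pencil. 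Once the skew-adjointness of $G_0$ and the boundedness of $P$ are in place, the generation and the growth bound are routine consequences of the abstract results of \cite{beyer}.
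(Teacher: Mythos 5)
Your proposal is correct and follows essentially the same route as the paper, which gives no explicit proof but simply verifies that $X$, $A_{-V_0+\varepsilon}$, $B$ and $C$ satisfy Assumptions~$1$ and $4$ of \cite{beyer} and then quotes the resulting well-posedness, generation, growth-bound and operator-pencil statements. Your additional sketch of the internal mechanics (skew-adjointness of the unperturbed generator, bounded perturbation, energy estimate, and the componentwise reduction of $iG-\lambda$ to $A-\lambda B-\lambda^2$) is an accurate recapitulation of how those cited abstract results are established.
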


Equation (\ref{eveq}) is the interpretation of (\ref{modelequation4})
used in this paper. In this sense, (iv) 
shows the well-posedness of
the initial value problem for (\ref{modelequation4}), i.e., the 
existence and uniqueness of the solution and its continuous dependence 
on the initial data. Moreover (\ref{rep}) gives a representation
of the solution and (iii) gives a rough bound for its growth
in time. In general, this bound is not strong enough to show, 
if applicable,  
stability
of the solutions to (\ref{modelequation4}). Part (v) reduces the 
determination of the generator's spectrum
to the determination of the spectrum
of the operator polynomial $A - \lambda B - \lambda^2 \, , \, \lambda \in 
{\mathbb{C}}$ \cite{markus,rodman}. Moreover (vi) does 
the same for the resolvents.

\begin{lemma} \label{eigenfunctionsI}
For every $\lambda \in {\mathbb{C}}$,
\begin{equation*} 
\ker(A - \lambda B - \lambda^2) = 
\ker(A_{0*} - \lambda B - \lambda^2) \, \, .
\end{equation*}
\end{lemma}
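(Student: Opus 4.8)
The plan is to establish the two inclusions separately, the inclusion ``$\supseteq$'' being essentially immediate and the inclusion ``$\subseteq$'' resting on a regularity (bootstrap) argument. Throughout I will use that $B$ and the identity are everywhere-defined bounded operators, so that the only domain constraints in $A-\lambda B-\lambda^2$ and $A_{0*}-\lambda B-\lambda^2$ come from $A$ and $A_{0*}$, respectively, and that both operators act through the same differential expression $-\frac{1}{\rho\triangle}(\triangle\,\cdot^{\,\prime})^{\,\prime}+V\,\cdot$\,.

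For ``$\supseteq$'', I would first record that $A_{0*}\subseteq A$. Indeed, Corollary~\ref{morepropertiesofA0}(ii) asserts that the closure of $A_{0*}$ coincides with $A$; since $A$ is closed, $A_{0*}\subseteq\bar{A}_{0*}=A$, so $D(A_{0*})\subseteq D(A)$ and $A$ agrees with $A_{0*}$ on $D(A_{0*})$. Hence any $f\in\ker(A_{0*}-\lambda B-\lambda^2)$ lies in $D(A)$ and satisfies $(A-\lambda B-\lambda^2)f=(A_{0*}-\lambda B-\lambda^2)f=0$, giving $f\in\ker(A-\lambda B-\lambda^2)$.

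For the reverse inclusion ``$\subseteq$'', the idea is to upgrade the regularity of a kernel element of $D(A)$ until it lands in the smaller domain $D(A_{0*})$. The proof of Lemma~\ref{propertiesofA0} shows that the differential expression is in the limit point case at both $r_{+}$ and $+\infty$; consequently the minimal operator $\bar{A}_0=A$ is the maximal operator, so $D(A)$ consists exactly of those $f\in L^2_{\mathbb{C}}(I,\rho\triangle)$ for which $f$ and $\triangle f^{\,\prime}$ are absolutely continuous on compact subintervals of $I$ and $-\frac{1}{\rho\triangle}(\triangle f^{\,\prime})^{\,\prime}+Vf\in L^2_{\mathbb{C}}(I,\rho\triangle)$, with $Af$ given a.e.\ by this expression. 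Thus, for $f\in\ker(A-\lambda B-\lambda^2)$, the identity $Af=\lambda Bf+\lambda^2 f=(\lambda b+\lambda^2)f$ holds a.e. Rewriting it as
\begin{equation*}
(\triangle f^{\,\prime})^{\,\prime}=\rho\triangle\,(V-\lambda b-\lambda^2)\,f ,
\end{equation*}
I would then observe that the right-hand side is continuous on $I$: the coefficients $\rho,\triangle$ are smooth, $V$ and $b$ are bounded and continuous, and $f$ is continuous since it is absolutely continuous. Hence $(\triangle f^{\,\prime})^{\,\prime}$ is continuous, so $\triangle f^{\,\prime}\in C^1(I)$, and because $\triangle$ is smooth and nowhere vanishing on $I$ this yields $f^{\,\prime}\in C^1(I)$, i.e.\ $f\in C^2(I,{\mathbb{C}})$. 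Since in addition $f\in L^2_{\mathbb{C}}(I,\rho\triangle)$ and $-\frac{1}{\rho\triangle}(\triangle f^{\,\prime})^{\,\prime}+Vf=(\lambda b+\lambda^2)f\in L^2_{\mathbb{C}}(I,\rho\triangle)$ (as $b$ is bounded), $f$ satisfies the defining conditions of $D(A_{0*})$ in Corollary~\ref{morepropertiesofA0}(ii) and $A_{0*}f=Af$, whence $(A_{0*}-\lambda B-\lambda^2)f=0$.

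I expect the main obstacle to be the clean identification of $D(A)$ with the maximal domain of the differential expression — that is, the statement that $A$ imposes no surviving boundary behaviour beyond membership of $f$ and $\triangle f^{\,\prime}$ in the maximal set. This is exactly what the limit point classification at both endpoints, already obtained in the proof of Lemma~\ref{propertiesofA0}, provides. Once that identification is granted, the remaining bootstrap is routine ODE regularity and uses only the continuity of $V$ and $b$; it is essential precisely because $D(A_{0*})$ is strictly smaller than $D(A)$ in general (its elements are required to be $C^2$), and the eigenvalue equation is what forces the extra smoothness on the kernel.
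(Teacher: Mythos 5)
Your proof is correct, but it takes a genuinely different route from the paper's. The paper's argument is a one-line resolvent bootstrap: from $f\in\ker(A-\lambda B-\lambda^2)$ it writes $f=(A+i)^{-1}(\lambda^2+\lambda B+i)f$ and then invokes the explicit integral representation of $(A+i)^{-1}$ in Corollary~\ref{morepropertiesofA0}(i), whose kernel is built from the $C^2$ solutions $v_1,v_2$, together with the smoothness of $b$, to conclude that $f$ is a $C^2$ function lying in the domain of $A_{0*}$ as described in Corollary~\ref{morepropertiesofA0}(ii); the reverse inclusion is, as in your argument, just $A_{0*}\subseteq\overline{A_{0*}}=A$. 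You instead identify $D(A)$ with the maximal domain of the Sturm--Liouville expression via the limit-point classification at both endpoints (established inside the proof of Lemma~\ref{propertiesofA0}) and then run a direct ODE regularity bootstrap on $(\triangle f^{\,\prime})^{\,\prime}=\rho\triangle(V-\lambda b-\lambda^2)f$. Both are sound. The paper's route is shorter because the resolvent formula is already on record and it never needs the maximal-domain characterization explicitly; your route is more self-contained at the level of ideas and makes the regularity gain transparent, but it does lean on two points you should make explicit: (a) the limit-point statement in Lemma~\ref{propertiesofA0} is proved for the transformed operator $A_{0UT}$ acting through $-(f^{\,\prime}/\rho)^{\,\prime}$ in the unweighted space, so the identification of $D(A)$ with the maximal domain of the original weighted expression requires transporting back through the unitary $U$ and through the Rellich--Kato domain equality $D(\bar{A}_{0U})=D(\bar{A}_{0UT})$; and (b) the standard fact that essential self-adjointness of the minimal operator forces its closure to coincide with the maximal operator (the adjoint), which is the Weidmann-type result you correctly flag as the crux. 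Neither point is a gap, only a place where the bookkeeping deserves a sentence.
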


\begin{proof}
Let $\lambda \in {\mathbb{C}}$ and $f \in 
\ker(A - \lambda B - \lambda^2)$. Then 
\begin{equation*}
f = (A + i)^{-1}(\lambda^2 + \lambda B + i) f \, \, .
\end{equation*}
Hence the statement follows from 
Corollary~\ref{morepropertiesofA0} and 
the fact that $b$ is a $C^{\infty}$-function.
\end{proof}

\begin{ass} In the following, we use the conventions that 
$\sqrt{z}$, for some 
$z \in {\mathbb{C}}$, denotes some 
square root of $z$, whereas $[\, \,]^{1/2} : {\mathbb{C}} \setminus 
(-\infty,0] \rightarrow {\mathbb{C}}$ 
denotes the principal part of the complex square root function, 
\end{ass}

\begin{theorem} \label{t:eigenfunction}
If $\lambda \in {\mathbb{C}} \setminus 
\{-\mu,\mu\}$, $\gamma \in 
{\mathbb{C}}$ satisfies 
\begin{equation} \label{eqgamma}
\gamma^2 - \gamma + \frac{1}{R_{-}^2} \, [
4 M^2 (2 M R - a^2) \lambda^2 + 4 m a M R \lambda + 
m^2 a^2 + M^2 - a^2] = 0 \, \, , 
\end{equation} 
then 
the $C^2$-solutions of 
\begin{equation} \label{eigenfunctions}
- \frac{1}{\rho \triangle} \, (\triangle 
v^{\, \prime})^{\, \prime} + (V - \lambda b - \lambda^2) v 
= 0
\end{equation}
are given by 
\begin{equation*} 
\triangle^{-1/2} \, (r - r_{+})^{\gamma} \, 
e^{\sqrt{\mu^2 - \lambda^2} \, r} 
w(-2 \sqrt{\mu^2 - \lambda^2} \, (r-r_{+})) \, \, , 
\end{equation*}
where $w : \Omega \rightarrow {\mathbb{C}}$ is some
solution of Kummer's  
equation in the complex domain
\begin{equation*}  
z w^{\, \prime \prime}(z) + (2 \gamma - z) w^{\, \prime}(z) -
\alpha w(z) \, \, , 
\end{equation*}
$z \in U$, 
where $U \subset {\mathbb{C}}$ is an open 
neighborhood of the half-ray 
\begin{equation*}
- \sqrt{\mu^2 - \lambda^2} \, . \, (0,\infty)
\end{equation*}
and 
\begin{equation*}
\alpha := \gamma + \frac{
4 M (R + M) \lambda^2 - c_l 
- 2M R \mu^2}{2R_{-}\sqrt{\mu^2 - \lambda^2}} \, \, .
\end{equation*} 
\end{theorem}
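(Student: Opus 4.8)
The plan is to reduce the singular Sturm--Liouville equation (\ref{eigenfunctions}) to Whittaker/Kummer form by two successive substitutions, the only nontrivial input being the defining relation (\ref{eqgamma}) for $\gamma$. First I would clear the prefactor by multiplying (\ref{eigenfunctions}) by $\rho\triangle$, obtaining the formally self-adjoint equation $-(\triangle v')' + \rho\triangle(V - \lambda b - \lambda^2)v = 0$, and then compute the coefficient $\rho\triangle(V-\lambda b-\lambda^2)$ explicitly from the closed forms of $\rho$, $V$ and $b$. Using $\triangle=(r-r_-)(r-r_+)$ to simplify the quotients $\triangle/(r-r_+)$ and $\triangle/(r-r_+)^2$, this coefficient organizes into four pieces: a term $(\mu^2-\lambda^2)\triangle$, a term linear in $(r-r_-)$ with coefficient $(c_l+2MR\mu^2-4M(R+M)\lambda^2)/R_-$, a term of the shape $[\,\cdots\,](r-r_-)/[R_-^2(r-r_+)]$, and an isolated term $(M^2-a^2)/\triangle$.

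The key observation, and the reason (\ref{eqgamma}) is imposed, is that the bracket $[\,\cdots\,]$ in the third piece is precisely $4M^2(2MR-a^2)\lambda^2+4maMR\lambda+m^2a^2+M^2-a^2$, which by (\ref{eqgamma}) equals $R_-^2(\gamma-\gamma^2)=-R_-^2\gamma(\gamma-1)$. Substituting this identity collapses the third piece to $\gamma(\gamma-1)(r-r_-)/(r-r_+)$. Thus the radial equation takes the form $-(\triangle v')'+[(\mu^2-\lambda^2)\triangle+\kappa(r-r_-)/R_-+\gamma(\gamma-1)(r-r_-)/(r-r_+)+(M^2-a^2)/\triangle]v=0$ with the abbreviation $\kappa:=c_l+2MR\mu^2-4M(R+M)\lambda^2$.

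Next I would undo the substitution (\ref{substitution}) at the separated level by setting $v=\triangle^{-1/2}\bar v$. This Liouville transformation eliminates the first-derivative term, and a short computation using $\triangle'=2(r-M)$, $\triangle''=2$ and $(\triangle')^2=4[\triangle+M^2-a^2]$ shows that $-(\triangle v')'$ contributes an extra potential $(M^2-a^2)\triangle^{-3/2}\bar v$ which, after dividing the whole equation by $\triangle^{1/2}$, exactly cancels the $(M^2-a^2)/\triangle$ term found above. Simplifying $(r-r_-)/\triangle=1/(r-r_+)$ and $(r-r_-)/[(r-r_+)\triangle]=1/(r-r_+)^2$, I obtain the Coulomb/Whittaker normal form $-\bar v''+[(\mu^2-\lambda^2)+\kappa/(R_-(r-r_+))+\gamma(\gamma-1)/(r-r_+)^2]\bar v=0$, whose indicial exponents at $r_+$ are $\gamma$ and $1-\gamma$ and whose exponential scale at infinity is $\sqrt{\mu^2-\lambda^2}$.

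Finally I would insert the ansatz $\bar v=(r-r_+)^\gamma e^{\sqrt{\mu^2-\lambda^2}\,r}\,w(z)$ with $z:=-2\sqrt{\mu^2-\lambda^2}\,(r-r_+)$ and verify, by the change of variables $d/dr\mapsto d/dz$, that $w$ satisfies Kummer's equation $zw''+(2\gamma-z)w'-\alpha w=0$: the factor $(r-r_+)^\gamma$ removes the centrifugal singularity so that the confluent parameter $2\gamma$ appears, and matching the coefficient of $w$ produces exactly $\alpha=\gamma+[4M(R+M)\lambda^2-c_l-2MR\mu^2]/(2R_-\sqrt{\mu^2-\lambda^2})$. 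The hypothesis $\lambda\notin\{-\mu,\mu\}$ guarantees $\sqrt{\mu^2-\lambda^2}\neq 0$, so this last reduction is nondegenerate and $\alpha$ is well defined. I expect the main obstacle to be purely organizational, namely carrying out the explicit assembly of $\rho\triangle(V-\lambda b-\lambda^2)$ in the first step without sign errors, so that the combination in (\ref{eqgamma}) is recognized cleanly; once that identity is in hand, the two substitutions are standard manipulations.
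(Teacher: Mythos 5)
Your proposal is correct and is exactly the ``direct calculation'' the paper invokes: the reduction via $v=\triangle^{-1/2}\bar v$ mirrors the substitution (\ref{substitution}) already used in Section~2 to bring the model equation to Coulomb-wave form, the identity $4M^2(2MR-a^2)\lambda^2+4maMR\lambda+m^2a^2+M^2-a^2=-R_-^2\gamma(\gamma-1)$ from (\ref{eqgamma}) collapses the $(r-r_+)^{-2}$ singularity to the centrifugal term $\gamma(\gamma-1)/(r-r_+)^2$, and the final ansatz yields Kummer's equation with precisely the stated $\alpha$. No gaps.
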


\begin{proof}
Direct calculation. 
\end{proof}

Hence, it follows from known properties 
of the solutions of the confluent hypergeometric 
equation that 

\begin{corollary} \label{eigenfunctionsII}
If $\lambda \in {\mathbb{C}} \setminus 
\{-\mu,\mu\}$, $\gamma \in 
{\mathbb{C}}$ satisfies (\ref{eqgamma}), 
and $2 \gamma \notin {\mathbb{Z}}$, then all $C^2$-solutions 
of (\ref{eigenfunctions}) are linear combinations of 
the linear independent solutions
\begin{align*}
& \triangle^{-1/2} \, (r - r_{+})^{\gamma} \, 
e^{\sqrt{\mu^2 - \lambda^2} \, r} 
M(\alpha,2 \gamma, 
-2 \sqrt{\mu^2 - \lambda^2} \, (r-r_{+})) \, \, , \\
& \triangle^{-1/2} \, (r - r_{+})^{1 - \gamma} \, 
e^{\sqrt{\mu^2 - \lambda^2} \, r} 
M(1 + \alpha - 2 \gamma,2(1 - \gamma), 
-2 \sqrt{\mu^2 - \lambda^2} \, (r-r_{+})) \, \, , 
\end{align*}
where for $\beta \in {\mathbb{C}} \setminus 
{\mathbb{Z}}$, $M(\alpha,\beta,\cdot)$
denotes the confluent hypergeometric function 
of the first kind, as defined in \cite{abramowitz}.
\end{corollary}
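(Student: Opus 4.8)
The plan is to combine Theorem~\ref{t:eigenfunction} with the classical fundamental system for Kummer's equation, transporting the latter through the explicit substitution supplied by the theorem. By Theorem~\ref{t:eigenfunction}, every $C^2$-solution of (\ref{eigenfunctions}) is of the form
\[
\triangle^{-1/2}\,(r-r_{+})^{\gamma}\,e^{\sqrt{\mu^2-\lambda^2}\,r}\,w(z),
\qquad z := -2\sqrt{\mu^2-\lambda^2}\,(r-r_{+}),
\]
where $w$ solves Kummer's equation with parameters $b=2\gamma$ and $a=\alpha$. The assignment $w \mapsto \triangle^{-1/2}(r-r_{+})^{\gamma}e^{\sqrt{\mu^2-\lambda^2}\,r}w(z)$ is linear, and since the prefactor is non-vanishing on $I$ and, because $\lambda \neq \pm\mu$, the map $r \mapsto z$ is a diffeomorphism of $I$ onto the half-ray $-\sqrt{\mu^2-\lambda^2}\cdot(0,\infty) \subset U$, this assignment is a bijection between the two-dimensional solution space of Kummer's equation and the solution space of (\ref{eigenfunctions}). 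Hence it suffices to exhibit a fundamental system for Kummer's equation and carry it across.

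First I would invoke the standard theory of the confluent hypergeometric equation (see \cite{abramowitz}): when $b=2\gamma$ is not an integer, the two functions
\[
M(\alpha,2\gamma,z) \qquad\text{and}\qquad z^{\,1-2\gamma}\,M(\alpha-2\gamma+1,\,2-2\gamma,\,z)
\]
are both well-defined and form a fundamental system. The hypothesis $2\gamma \notin \mathbb{Z}$ is precisely what is needed: it simultaneously keeps $2\gamma$ and $2(1-\gamma)=2-2\gamma$ away from the non-positive integers, so that both $M$-factors are defined in the sense of \cite{abramowitz}, and it forces $b \neq 1$, so that the Wronskian $W=(1-2\gamma)\,z^{-2\gamma}e^{z}$ of the pair is non-vanishing for $z \neq 0$. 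Since $z \neq 0$ everywhere on the half-ray, linear independence holds throughout $I$.

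Transporting this pair through the bijection, the first solution yields at once $\triangle^{-1/2}(r-r_{+})^{\gamma}e^{\sqrt{\mu^2-\lambda^2}\,r}M(\alpha,2\gamma,z)$, the first expression in the statement. For the second, the factor $z^{\,1-2\gamma}$ contributes $(-2\sqrt{\mu^2-\lambda^2})^{1-2\gamma}(r-r_{+})^{1-2\gamma}$; absorbing the non-zero constant into the arbitrary linear coefficient and combining $(r-r_{+})^{\gamma}\cdot(r-r_{+})^{1-2\gamma}=(r-r_{+})^{1-\gamma}$ produces exactly $\triangle^{-1/2}(r-r_{+})^{1-\gamma}e^{\sqrt{\mu^2-\lambda^2}\,r}M(1+\alpha-2\gamma,2(1-\gamma),z)$, the second expression. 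Because a linear bijection sends a fundamental system to a fundamental system, these two functions span all $C^2$-solutions of (\ref{eigenfunctions}).

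The bookkeeping of the power of $(r-r_{+})$ and the absorbed constant is routine. The one point deserving genuine care — and the most likely source of an error — is the exact role of $2\gamma \notin \mathbb{Z}$: I would verify explicitly that it excludes all the degenerate cases of Kummer's equation at once, namely the logarithmic case $b=1$ and the cases in which one of the $M$-factors collapses because its second argument is a non-positive integer. Relaxing the hypothesis even to ``$2\gamma$ not a non-positive integer'' would leave $M(\alpha,2\gamma,\cdot)$ defined but could destroy either the well-definedness or the independence of the second solution, so the stated hypothesis is essentially sharp for this particular choice of fundamental pair.
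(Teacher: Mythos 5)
Your proposal is correct and follows the same route as the paper, which simply cites ``known properties of the solutions of the confluent hypergeometric equation'' applied to the reduction in Theorem~\ref{t:eigenfunction}; you have merely spelled out those properties (the standard fundamental pair $M(\alpha,2\gamma,z)$, $z^{1-2\gamma}M(\alpha-2\gamma+1,2-2\gamma,z)$ for $2\gamma\notin\mathbb{Z}$) and the exponent bookkeeping explicitly. No gaps.
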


For $\lambda \in {\mathbb{C}}$, 
(\ref{eqgamma}) has the solutions 

\begin{equation} \label{gammas}
\gamma = \frac{1}{2} \pm \frac{1}{R_{-}}
\, \sqrt{- 4 M^2 (2 M R - a^2) \lambda^2 - 
4 m a M R \lambda - 
m^2 a^2 - (M^2 - a^2) + \frac{R_{-}^2}{4}} \, \, .
\end{equation}
In the special case 
$R = r_{+}$, the latter reduces
to 
\begin{equation} \label{gammas1}
\gamma = \frac{1}{2} \pm 
\frac{i}{2 (M^2 - a^2)^{1/2}}(2 M r_{+} \lambda + ma) \, \, .
\end{equation} 
Hence in this special case, the condition 
$2 \gamma \notin {\mathbb{Z}}$ is equivalent
to the condition   
\begin{equation} \label{exceptions1}
\lambda \neq 
- \, \frac{1}{2 M r_{+}} \, [ma + i k (M^2 - a^2)^{1/2}]
\, \, , 
\end{equation}
where $k \in {\mathbb{Z}}$.
Further, straightforward calculation shows that argument 
of the square root in (\ref{gammas}) 
is in $(-\infty,0]$ if and only 
if $\lambda \in S$, where  
\begin{align*}
& S := 
 \bigg(- \infty, \frac{- m a R 
- \left[
m^2 a^2 \triangle(R) + (2MR-a^2)|M^2 - a^2 - (R_{-}^2/4)|
\right]^{1/2} }{2M(2MR-a^2)} \bigg] \\
&   
\cup \bigg[\frac{ - m a R +
\left[ 
m^2 a^2 \triangle(R) + (2MR-a^2)|M^2 - a^2 - (R_{-}^2/4)|
\right]^{1/2} }{2M(2MR-a^2)}, \infty\bigg) 
\, \, (\subset {\mathbb{R}})\, \, 
.
\end{align*}
In particular, we note for the special case 
$R = r_{+}$ that 
\begin{equation*}
S = {\mathbb{R}} \, \, . 
\end{equation*}

Hence
\begin{lemma}
For $\lambda \in {\mathbb{C}}$,
(\ref{eqgamma}) has a solution with real part $>1/2$
if and only if $\lambda \in {\mathbb{C}} \setminus S$. In the 
remaining cases, this equation has only solutions 
with real part $1/2$. If $\lambda \in 
{\mathbb{C}} \setminus S$, the solution 
with real part $>1/2$ is given by  
\begin{equation} \label{L2gamma}
\frac{1}{2} + \frac{1}{R_{-}}
\left[- 4 M^2 (2 M R - a^2) \lambda^2 - 
4 m a M R \lambda - 
m^2 a^2 - (M^2 - a^2) + \frac{R_{-}^2}{4} \right]^{1/2} 
\, \, ,
\end{equation}
and the other solution of (\ref{eqgamma}) has 
real part $<1/2$.   
\end{lemma}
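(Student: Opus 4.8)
The plan is to read everything off directly from the explicit root formula (\ref{gammas}) together with the characterization of $S$ established just above, reducing the whole statement to an elementary property of the complex square root. Writing the two roots of (\ref{eqgamma}) as $\gamma_{\pm} = \tfrac12 \pm \tfrac{1}{R_{-}}\sqrt{D}$, where
\[
D := -4M^2(2MR-a^2)\lambda^2 - 4maMR\lambda - m^2a^2 - (M^2-a^2) + \tfrac{R_{-}^2}{4}
\]
is the radicand appearing in (\ref{gammas}), I first note that $R_{-}>0$ under the standing conventions, so that $\operatorname{Re}\gamma_{\pm} = \tfrac12 \pm \tfrac{1}{R_{-}}\operatorname{Re}\sqrt{D}$. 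Thus the entire assertion becomes a question about the sign of the real part of a square root of $D$.

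The key elementary fact I would invoke is a dichotomy for the square root. If $D \notin (-\infty,0]$, then the principal branch $[D]^{1/2}$ is defined, and writing $D = |D|e^{i\theta}$ with $\theta \in (-\pi,\pi)$ one has $\operatorname{Re}[D]^{1/2} = |D|^{1/2}\cos(\theta/2) > 0$, since $\theta/2 \in (-\pi/2,\pi/2)$. If instead $D \in (-\infty,0]$, then every square root of $D$ is purely imaginary or zero, so $\operatorname{Re}\sqrt{D} = 0$ for either choice of root. By the computation recorded immediately before the lemma, $D \in (-\infty,0]$ holds precisely when $\lambda \in S$, equivalently $D \notin (-\infty,0]$ precisely when $\lambda \in \mathbb{C}\setminus S$.

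With this in hand the proof splits into two cases. If $\lambda \in S$, then $\operatorname{Re}\sqrt{D} = 0$, so both roots satisfy $\operatorname{Re}\gamma_{\pm} = \tfrac12$; in particular there is no root with real part exceeding $\tfrac12$, which settles the ``remaining cases'' assertion and one direction of the equivalence. If $\lambda \in \mathbb{C}\setminus S$, then choosing $\sqrt{D} = [D]^{1/2}$ gives $\operatorname{Re}[D]^{1/2} > 0$, whence $\gamma_{+} = \tfrac12 + \tfrac{1}{R_{-}}[D]^{1/2}$ has real part strictly greater than $\tfrac12$ and $\gamma_{-}$ has real part strictly less than $\tfrac12$. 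Since $\gamma_{+}$ is exactly the expression (\ref{L2gamma}), this simultaneously produces the required root with real part $>\tfrac12$, identifies it with (\ref{L2gamma}), and shows the other root lies below $\tfrac12$, completing the equivalence.

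There is essentially no deep obstacle here; the only point requiring care is the boundary behaviour at the endpoints of the two half-rays defining $S$, where $D = 0$ and the roots coalesce at $\tfrac12$. Because $S$ is defined as a union of \emph{closed} half-rays, these endpoints belong to $S$ and fall correctly under the first case, so the claim ``only solutions with real part $1/2$'' remains accurate there. I would also remark explicitly that, since $S \subset \mathbb{R}$, every non-real $\lambda$ automatically lies in $\mathbb{C}\setminus S$, consistent with $D \notin (-\infty,0]$ for such $\lambda$.
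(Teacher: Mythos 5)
Your argument is correct and is essentially the one the paper intends: the lemma is stated there as an immediate consequence ("Hence") of the explicit root formula (\ref{gammas}) and the preceding characterization of $S$ as the set where the radicand lies in $(-\infty,0]$, and your write-up simply makes explicit the elementary square-root dichotomy (principal root has strictly positive real part off $(-\infty,0]$, both roots are purely imaginary on it) together with $R_{-}>0$. Your closing remarks on the closed endpoints of $S$ and on non-real $\lambda$ are consistent with the paper's conventions and add nothing that conflicts with its reasoning.
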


Hence, we conclude from 
Corollary~\ref{eigenfunctionsII} that 

\begin{corollary}
If $\lambda \in {\mathbb{C}} \setminus 
\{-\mu,\mu\}$, and the solutions of 
(\ref{eqgamma}) are not part of 
$(1/2).{\mathbb{Z}}$, then there is 
a non-trivial solution $v$ to (\ref{eigenfunctions}) such that 
$\chi_{_{(r_{+},c)}} \cdot v \in X$ for some $c > r_{+}$ 
if and only if  
$\lambda \in {\mathbb{C}} \setminus S$. In the latter case, 
all such $v$ are multiples of 
\begin{equation*}
\triangle^{-1/2} \, (r - r_{+})^{\gamma} \, 
e^{\sqrt{\mu^2 - \lambda^2} \, r} 
M(\alpha,2 \gamma, 
-2 \sqrt{\mu^2 - \lambda^2} \, (r-r_{+})) \, \, ,
\end{equation*} 
where $\gamma$ is given by (\ref{L2gamma}).
\end{corollary}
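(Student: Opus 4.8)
The plan is to reduce everything to the behaviour of solutions on a right neighbourhood of $r_{+}$. The hypothesis that the roots of (\ref{eqgamma}) avoid $(1/2)\cdot\mathbb{Z}$ forces $2\gamma\notin\mathbb{Z}$, so Corollary~\ref{eigenfunctionsII} applies and every $C^2$-solution of (\ref{eigenfunctions}) is a linear combination $c_1u_1+c_2u_2$ of the two explicit solutions listed there, where $u_1$ carries the factor $(r-r_{+})^{\gamma}$ and $u_2$ the factor $(r-r_{+})^{1-\gamma}$. Since the property $\chi_{_{(r_{+},c)}}\cdot v\in X$ concerns only the behaviour of $v$ near $r_{+}$, the whole statement reduces to deciding, for each $\lambda$, which combinations lie in $L^2((r_{+},c),\rho\triangle)$. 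I would therefore first pin down the leading behaviour of $u_1,u_2$ and of the weight at $r_{+}$, and then read off the dichotomy from the preceding Lemma.

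For the asymptotics, recall $M(\alpha,\beta,0)=1$ and that the exponential factor is nonzero at $r_{+}$; hence near $r_{+}$ one has $u_1\sim\triangle^{-1/2}(r-r_{+})^{\gamma}$ and $u_2\sim\triangle^{-1/2}(r-r_{+})^{1-\gamma}$ up to nonvanishing bounded factors. From the displayed expression for $\rho\triangle$, the dominant contribution at $r_{+}$ comes from $4M^2(2MR-a^2)(r-r_{-})/[R_{-}^2(r-r_{+})]$, whose coefficient is positive because $2MR-a^2>0$. A direct computation of $\int_{r_{+}}|u_j|^2\,\rho\triangle\,dr$ (treating the generic case $a<M$ and the degenerate case $a=M$ separately, the powers of $\triangle^{-1/2}$ and of the weight compensating in both) then yields the clean criterion: $u_1\in X$ near $r_{+}$ iff $\mathrm{Re}\,\gamma>1/2$, and $u_2\in X$ near $r_{+}$ iff $\mathrm{Re}\,\gamma<1/2$.

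Suppose first $\lambda\in\mathbb{C}\setminus S$. By the preceding Lemma, (\ref{eqgamma}) has the root (\ref{L2gamma}) with real part $>1/2$, the other root $1-\gamma$ having real part $<1/2$. Taking $\gamma$ as in (\ref{L2gamma}), the criterion gives $u_1\in X$ but $u_2\notin X$ near $r_{+}$. Because the two solutions carry the distinct leading powers $(r-r_{+})^{\gamma-1/2}$ and $(r-r_{+})^{(1-\gamma)-1/2}$, with real parts $>0$ and $<0$ respectively, any combination with $c_2\neq 0$ is dominated near $r_{+}$ by the non-integrable $u_2$-term and so fails to lie in $X$; thus exactly the nonzero multiples of $u_1$ qualify, which is the asserted form.

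Finally suppose $\lambda\in S$. Then both roots of (\ref{eqgamma}) have real part $1/2$, so $\gamma=1/2+i\beta$ and $1-\gamma=1/2-i\beta$, the exclusion $\gamma\notin(1/2)\cdot\mathbb{Z}$ guaranteeing $\beta\in\mathbb{R}\setminus\{0\}$. Here term-by-term power comparison is inconclusive, and this is the main obstacle: one must rule out square-integrability of \emph{every} nontrivial combination, not just of $u_1$ and $u_2$ individually. I would handle this by the substitution $s=\ln(r-r_{+})$, under which $\int_{r_{+}}|c_1u_1+c_2u_2|^2\,\rho\triangle\,dr$ becomes, up to a positive constant, $\int_{-\infty}|c_1e^{i\beta s}+c_2e^{-i\beta s}|^2\,ds$; the integrand is a nonnegative almost periodic function with mean $|c_1|^2+|c_2|^2$, so the integral diverges unless $c_1=c_2=0$. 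Hence no nontrivial solution belongs to $X$ near $r_{+}$, which completes the equivalence and with it the corollary.
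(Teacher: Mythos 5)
Your argument is correct and follows essentially the same route the paper takes implicitly: the paper simply asserts the corollary as a consequence of Corollary~3.11 (the $M$-function basis $u_1,u_2$ with leading powers $(r-r_{+})^{\gamma}$, $(r-r_{+})^{1-\gamma}$) and the preceding lemma on the real parts of the roots of (3.10), which is exactly the asymptotic-integrability analysis against the weight $\rho\triangle\sim C(r-r_{+})^{-1}$ that you carry out. Your explicit treatment of the borderline case $\lambda\in S$ (both roots with real part $1/2$, ruling out every nontrivial combination via the logarithmic substitution and the positive mean of the almost periodic integrand) supplies a detail the paper leaves unstated, and it is the right way to close that case.
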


If $\lambda \in {\mathbb{C}} \setminus 
((-\infty,-\mu] \cup [\mu,\infty))$, and the solutions of 
(\ref{eqgamma}) are not part of $(1/2).{\mathbb{Z}}$,
it follows from the asymptotic 
properties of the confluent hypergeometric function 
of the first kind, see, e.g., 
\cite{abramowitz}~13.1.4, 
that 
\begin{equation*}
\triangle^{-1/2} \, (r - r_{+})^{\gamma} \, 
e^{- (\mu^2 - \lambda^2)^{1/2} \, r} 
M(\alpha,2 \gamma, 
2 (\mu^2 - \lambda^2)^{1/2} \, (r-r_{+})) \, \, ,
\end{equation*} 
is exponentially growing for $r \rightarrow \infty$
if $\alpha \notin - {\mathbb{N}}$. Hence, by use of 
Lemma~\ref{eigenfunctionsI}, we arrive at
 
\begin{theorem} \label{basictheorem}
If $\lambda \in {\mathbb{C}} \setminus 
((-\infty,-\mu] \cup [\mu,\infty))$, and the solutions of 
(\ref{eqgamma}) are not part of $(1/2).{\mathbb{Z}}$, then 
$\ker(A - \lambda B - \lambda^2)$ is non-trivial 
if and only if $\lambda \in {\mathbb{C}} \setminus S$
and 
\begin{align} \label{spectralequation}
\frac{1}{2} & + \frac{1}{R_{-}}
\left[- 4 M^2 (2 M R - a^2) \lambda^2 - 
4 m a M R \lambda - 
m^2 a^2 - (M^2 - a^2) + \frac{R_{-}^2}{4} \right]^{1/2} \nonumber 
\\ 
& + \frac{2M (R + M)}{R_{-}} \, (\mu^2 - \lambda^2)^{1/2} + 
\frac{
c_l 
- 2 M (R + 2 M) \mu^2}{2R_{-}(\mu^2 - \lambda^2)^{1/2}} 
= - n 
\end{align}
for some $n \in {\mathbb{N}}$. In the latter case,
$\ker(A - \lambda B - \lambda^2)$ consists of the 
multiples of 
\begin{equation} \label{eq:eigenfunction} 
\triangle^{-1/2} \, (r - r_{+})^{\gamma} \, 
e^{- (\mu^2 - \lambda^2)^{1/2} \, r} 
M(-n,2 \gamma, 
2 (\mu^2 - \lambda^2)^{1/2} \, (r-r_{+})) \, \, ,
\end{equation} 
where 
$\gamma$ is given by (\ref{L2gamma}).
\end{theorem}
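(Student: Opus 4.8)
\section*{Proof}

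The plan is to characterize $\ker(A - \lambda B - \lambda^2)$ as the space of $C^2$-solutions of (\ref{eigenfunctions}) that are square-integrable at \emph{both} endpoints of $I$, and then to impose the two integrability requirements separately. By Lemma~\ref{eigenfunctionsI}, $\ker(A - \lambda B - \lambda^2) = \ker(A_{0*} - \lambda B - \lambda^2)$, so every nonzero kernel element is a nontrivial $C^2$-solution $v$ of (\ref{eigenfunctions}) lying in $X = L^2_{\mathbb{C}}(I,\rho\triangle)$. Membership in $X$ is equivalent to square-integrability near $r_{+}$ together with square-integrability near $+\infty$, and I would treat these two conditions in turn.

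For the behavior at $r_{+}$, I would invoke the Corollary immediately preceding the theorem: under the stated hypotheses a nonzero solution that is square-integrable near $r_{+}$ exists if and only if $\lambda \in {\mathbb{C}} \setminus S$, and in that case every such solution is a multiple of the $\gamma$-branch solution with $\gamma$ given by (\ref{L2gamma}), i.e.\ the root of (\ref{eqgamma}) with real part $>1/2$. This already settles one direction: if the kernel is nontrivial then necessarily $\lambda \in {\mathbb{C}} \setminus S$, and the eigenfunction must be this particular solution up to scalar multiples.

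It then remains to decide when this $r_{+}$-admissible solution is also square-integrable at $+\infty$. Here I would fix the square root so that $\sqrt{\mu^2-\lambda^2} = -(\mu^2-\lambda^2)^{1/2}$, rewriting the solution in the form (\ref{eq:eigenfunction}); since $\lambda \in {\mathbb{C}} \setminus ((-\infty,-\mu]\cup[\mu,\infty))$ guarantees $\mu^2-\lambda^2 \notin (-\infty,0]$, the principal root has strictly positive real part, so the prefactor $e^{-(\mu^2-\lambda^2)^{1/2}r}$ decays. Invoking the asymptotic expansion of the confluent hypergeometric function recorded before the theorem, the dominant contribution carries a factor $1/\Gamma(\alpha)$ multiplying $e^{+(\mu^2-\lambda^2)^{1/2}r}$, so the solution grows exponentially, and hence fails to lie in $X$, unless that coefficient vanishes, i.e.\ unless $\alpha \in -{\mathbb{N}}$. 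When $\alpha = -n$ the series $M(-n,2\gamma,\cdot)$ terminates as a polynomial, and its product with the decaying exponential lies in $X$ at $+\infty$, while the near-$r_{+}$ integrability, governed by $(r-r_{+})^{\gamma-1/2}$ with $\mathrm{Re}\,\gamma>1/2$, is unaffected. Thus the kernel is nontrivial precisely when $\lambda \in {\mathbb{C}} \setminus S$ and $\alpha = -n$ for some $n \in {\mathbb{N}}$, with eigenfunction (\ref{eq:eigenfunction}).

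Finally I would convert $\alpha = -n$ into the spectral equation (\ref{spectralequation}). Substituting the formula for $\alpha$ from Theorem~\ref{t:eigenfunction} with the chosen sign of the square root, combining over the common denominator $2R_{-}(\mu^2-\lambda^2)^{1/2}$, and simplifying the $\mu^2$-coefficients (the only nonobvious cancellation being $4M(R+M)-2M(R+2M)=2MR$) reproduces (\ref{spectralequation}), with $\gamma$ from (\ref{L2gamma}) supplying its first two terms. I expect the main obstacle to be the asymptotic step at $+\infty$: one must verify that, along the ray on which the Kummer argument tends to infinity, the $e^{z}z^{\alpha-2\gamma}$ term is genuinely dominant and that its vanishing is both necessary and sufficient for $L^2$-membership; the remainder is the routine bookkeeping of matching the two endpoint conditions and the algebraic reduction to (\ref{spectralequation}).
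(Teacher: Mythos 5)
Your proposal is correct and follows essentially the same route as the paper: reduction via Lemma~\ref{eigenfunctionsI} to $C^2$-solutions in $X$, the preceding corollary to pin down the unique (up to scalars) solution square-integrable at $r_{+}$ when $\lambda \in {\mathbb{C}}\setminus S$, the Abramowitz--Stegun 13.1.4 asymptotics with the $1/\Gamma(\alpha)$ coefficient forcing $\alpha \in -{\mathbb{N}}$ for integrability at infinity, and the algebraic rewriting of $\alpha=-n$ into (\ref{spectralequation}). The cancellation you flag, $2MR-4M(R+M)=-2M(R+2M)$, is exactly the one needed, so nothing essential is missing.
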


Since in the case that $\mu^2 \leq 
[2M(R+2M)]^{-1} c_l$, 
the left hand side of (\ref{spectralequation}) 
has real part $> 1/2$, whereas the right hand side 
has real part $\leq 0$, we obtain the following 
corollary.

\begin{corollary} \label{lowerbound}
If 
\begin{equation*}
 \mu^2 \leq \frac{l(l+1)}{2 M (R + 2 M)} 
\end{equation*}
$\lambda \in {\mathbb{C}} \setminus {\mathbb{R}}$, 
and the solutions of 
(\ref{eqgamma}) are not part of $(1/2).{\mathbb{Z}}$, then 
$\ker(A - \lambda B - \lambda^2)$ is trivial. 
\end{corollary}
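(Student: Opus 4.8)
The plan is to invoke Theorem~\ref{basictheorem} and show that under the stated hypotheses its characterizing condition can never be met, so that the kernel is forced to be trivial. Since $\lambda\in{\mathbb{C}}\setminus{\mathbb{R}}$ we have in particular $\lambda\in{\mathbb{C}}\setminus((-\infty,-\mu]\cup[\mu,\infty))$, and the assumption that the solutions of (\ref{eqgamma}) are not part of $(1/2).{\mathbb{Z}}$ carries over verbatim, so the theorem applies. By that theorem, $\ker(A-\lambda B-\lambda^2)$ is non-trivial if and only if $\lambda\in{\mathbb{C}}\setminus S$ \emph{and} the spectral equation (\ref{spectralequation}) holds for some $n\in{\mathbb{N}}$. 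I would therefore dispose of the two conjuncts: if $\lambda\in S$ the first already fails and we are done, while if $\lambda\in{\mathbb{C}}\setminus S$ I would show that (\ref{spectralequation}) cannot hold, by comparing real parts of its two sides.

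The right-hand side of (\ref{spectralequation}) is $-n$ with $n\in{\mathbb{N}}$, hence $\mathrm{Re}(-n)=-n\le 0$. For the left-hand side I would estimate the real part term by term. First, its two leading terms form $\gamma$ as given by (\ref{L2gamma}); for $\lambda\in{\mathbb{C}}\setminus S$ the argument of the square root lies in ${\mathbb{C}}\setminus(-\infty,0]$, so the principal square root has positive real part, and with $R_{-}>0$ this gives $\mathrm{Re}\,\gamma>1/2$ (this is exactly the content of the Lemma characterizing the solutions of (\ref{eqgamma}) with real part $>1/2$). Next I would check that for $\lambda\in{\mathbb{C}}\setminus{\mathbb{R}}$ one has $\mu^2-\lambda^2\notin(-\infty,0]$: writing $\lambda=x+iy$ with $y\neq 0$ gives $\mathrm{Im}(\mu^2-\lambda^2)=-2xy$, which vanishes only for $x=0$, and then $\mu^2-\lambda^2=\mu^2+y^2>0$. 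Hence $(\mu^2-\lambda^2)^{1/2}$ has strictly positive real part, and since $2M(R+M)/R_{-}>0$ (using $M(R+M)>0$ and $R_{-}>0$ as already noted in the text), the third term has strictly positive real part.

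It remains to control the last term $[c_l-2M(R+2M)\mu^2]/[2R_{-}(\mu^2-\lambda^2)^{1/2}]$. The hypothesis $\mu^2\le l(l+1)/[2M(R+2M)]=c_l/[2M(R+2M)]$ makes its numerator non-negative. Writing $w:=(\mu^2-\lambda^2)^{1/2}$, the reciprocal satisfies $\mathrm{Re}(1/w)=\mathrm{Re}(w)/|w|^2>0$, so with a non-negative real numerator and $R_{-}>0$ this term has non-negative real part. Summing the four contributions yields $\mathrm{Re}(\text{LHS})>1/2>0\ge\mathrm{Re}(\text{RHS})$, so (\ref{spectralequation}) fails and Theorem~\ref{basictheorem} forces $\ker(A-\lambda B-\lambda^2)$ to be trivial.

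The algebra here is negligible; the only genuinely delicate point is the handling of the branch cut of the principal square root. One must verify that $\lambda\in{\mathbb{C}}\setminus{\mathbb{R}}$ keeps $\mu^2-\lambda^2$ off the cut $(-\infty,0]$, so that $(\mu^2-\lambda^2)^{1/2}$ really has \emph{positive} real part rather than landing on the imaginary axis, and correspondingly that the excluded case $\lambda\in S$ is dealt with separately through the failure of the first conjunct, where no real-part estimate is needed. Once these branch considerations are pinned down, the sign comparison is immediate and the corollary follows.
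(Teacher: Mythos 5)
Your proposal is correct and follows exactly the paper's own argument: the paper disposes of the corollary in the single sentence preceding it, observing that under the mass bound the left-hand side of (\ref{spectralequation}) has real part $>1/2$ while the right-hand side has real part $\leq 0$, so Theorem~\ref{basictheorem} forces the kernel to be trivial. Your write-up merely supplies the term-by-term real-part estimates and the branch-cut verification that the paper leaves implicit, so it is the same proof in expanded form.
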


In particular, in the case of vanishing mass $\mu = 0$ and for 
the considered cases, 
this implies that 
$\ker(A - \lambda B - \lambda^2)$ is trivial.
Also, we obtain the following estimate for the case that 
the previous inequality is violated.

\begin{corollary}
If 
\begin{equation*}
 \mu^2 > \frac{l(l+1)}{2 M (R + 2 M)} 
\end{equation*}
$\lambda \in {\mathbb{C}} \setminus {\mathbb{R}}$, 
the solutions of 
(\ref{eqgamma}) are not part of $(1/2).{\mathbb{Z}}$ and  
$\ker(A - \lambda B - \lambda^2)$ is non-trivial, then 
\begin{equation} \label{spectralestimate}
|\lambda| \leq 
\left\{\mu^2 + \frac{[2M(R+2M)\mu^2 
- l(l+1)]^2}{R_{-}^2} \right\}^{1/2} \, \, .
\end{equation}
\end{corollary}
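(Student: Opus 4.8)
The plan is to read the corollary off the spectral equation~(\ref{spectralequation}) of Theorem~\ref{basictheorem}. First I would verify that the hypotheses of that theorem are in force: since $\lambda\in\mathbb{C}\setminus\mathbb{R}$ while $(-\infty,-\mu]\cup[\mu,\infty)\subset\mathbb{R}$ and $S\subset\mathbb{R}$, we automatically have $\lambda\in\mathbb{C}\setminus((-\infty,-\mu]\cup[\mu,\infty))$ and $\lambda\in\mathbb{C}\setminus S$. Together with the standing assumption that the solutions of~(\ref{eqgamma}) avoid $(1/2)\mathbb{Z}$, Theorem~\ref{basictheorem} then forces a non-trivial kernel to satisfy~(\ref{spectralequation}) for some $n\in\mathbb{N}$. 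This reduces the claim to an estimate squeezed out of a single scalar identity.

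Next I would introduce the shorthands $p:=(\mu^2-\lambda^2)^{1/2}$ (principal branch) and $q:=2M(R+2M)\mu^2-l(l+1)$, noting that $q>0$ is exactly the standing hypothesis. In this notation the four left-hand summands of~(\ref{spectralequation}) are $\tfrac12$, the radical term $R_-^{-1}[\cdots]^{1/2}$ (so that $\tfrac12+R_-^{-1}[\cdots]^{1/2}=\gamma$ with $\gamma$ from~(\ref{L2gamma})), the term $\frac{2M(R+M)}{R_-}\,p$, and the term $-\frac{q}{2R_-\,p}$, while the right-hand side is $-n$. Two sign facts drive the proof: $\operatorname{Re}p>0$, because for non-real $\lambda$ one has $\mu^2-\lambda^2\in\mathbb{C}\setminus(-\infty,0]$, whose principal root lies in the right half-plane; and $\operatorname{Re}\gamma>\tfrac12$, which is precisely the content of~(\ref{L2gamma}) for $\lambda\notin S$.

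I would then take the real part of~(\ref{spectralequation}) and isolate the single term carrying $q$. Using $\operatorname{Re}(1/p)=\operatorname{Re}p/|p|^2>0$, $\operatorname{Re}p>0$, $\operatorname{Re}\gamma>\tfrac12$ and $n\ge0$, this gives
\begin{equation*}
\frac{q}{2R_-}\,\operatorname{Re}\!\Big(\frac1p\Big)=\operatorname{Re}\gamma+\frac{2M(R+M)}{R_-}\,\operatorname{Re}p+n\;\ge\;\frac12,
\end{equation*}
so that $\operatorname{Re}(1/p)\ge R_-/q$. Since always $\operatorname{Re}(1/p)\le|1/p|=1/|p|$, this yields the clean bound $|p|\le q/R_-$. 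Finally, because $\mu^2$ is real and positive, the triangle inequality gives $|\lambda|^2=|\lambda^2|=|\mu^2-p^2|\le\mu^2+|p|^2\le\mu^2+q^2/R_-^2$, which is exactly~(\ref{spectralestimate}) once $q$ is written out.

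The routine parts — the algebraic rewriting of~(\ref{spectralequation}) and the branch bookkeeping for $p$ — are harmless; the step I expect to be the crux is recognizing that the two elementary estimates $\operatorname{Re}(1/p)\le1/|p|$ and $|\mu^2-p^2|\le\mu^2+|p|^2$, combined with the positivity of $\operatorname{Re}\gamma$, $\operatorname{Re}p$ and $n$, collapse the seemingly complicated quantization identity into the stated bound. In particular it is essential that every contribution dropped on the right-hand side ($\operatorname{Re}\gamma$ beyond $\tfrac12$, the $\operatorname{Re}p$ term, and $n$) is non-negative, so that discarding them weakens the inequality only in the direction that is safe.
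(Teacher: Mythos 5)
Your proof is correct and takes essentially the same route as the paper's: both isolate the term $[c_l-2M(R+2M)\mu^2]/[2R_-(\mu^2-\lambda^2)^{1/2}]$ in the spectral equation, observe that every other contribution to the real part of the left-hand side is nonnegative (indeed at least $1/2$) while the right-hand side is $\le 0$, and thereby deduce $|\mu^2-\lambda^2|\le q^2/R_-^2$ before finishing with the triangle inequality. The only difference is presentational: the paper phrases the key step as a contradiction (assuming $|\lambda^2-\mu^2|$ exceeds the bound), whereas you run the same estimate directly via $\operatorname{Re}(1/p)\le 1/|p|$.
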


\begin{proof}
Straightforward estimates yield that the inequality 
\begin{equation*}
|\lambda^2 - \mu^2| > \frac{[2M(R+2M)\mu^2 
- l(l+1)]^2}{R_{-}^2}
\end{equation*}
implies that 
\begin{equation*}
\frac{1}{2} - {\mathrm Re} \left[ \frac{
2 M (R + 2 M) \mu^2 - c_l}{2R_{-}(\mu^2 - \lambda^2)^{1/2}}
\right]
> 0 
\end{equation*}

and hence that 
the left hand side of (\ref{spectralequation}) 
has real part $> 0$, whereas the right hand side 
has real part $\leq 0$, which contradicts that 
$\ker(A - \lambda B - \lambda^2)$ is non-trivial.$\lightning$
Hence, 
\begin{equation*}
|\lambda^2 - \mu^2| \leq \frac{[2M(R+2M)\mu^2 
- l(l+1)]^2}{R_{-}^2} \, \, .
\end{equation*}
The latter inequality implies (\ref{spectralestimate}).
\end{proof}

\begin{lemma}
If $\lambda$ is such that the solutions of 
(\ref{eqgamma}) are not part of $(1/2).{\mathbb{Z}}$,
then $\lambda \in {\mathbb{R}} \times (-\infty,0)$ is such 
that   
$\ker(A - \lambda B - \lambda^2)$ is non-trivial if and only if  
\begin{equation*}
\lambda = - \, \frac{i}{2z} \, (z^2 - \mu^2)
\end{equation*}
for $z \in ({\mathbb{C}} \setminus B_{\mu}(0)) 
\cap ((0,\infty) \times {\mathbb{R}})$ satisfying 
\begin{align*}
& M^2 (M^2 + a^2 + R^2) z^8 +
 2 M \left[\left(n + \frac{1}{2}\right) R_{-} (R + M) -
    i m a R\right] z^7 \\
& + \left[ (M^2 - a^2) + 
    m^2 a^2 + n(n+1) R_{-}^2 + 
    2 M (R + M) (c_{l} - 2 M^2 \mu^2) \right] z^6 \\
&
+ 2 \left\{ 
\left(n + \frac{1}{2}\right) R_{-} \left[c_{l} + M (R - M) \mu^2 \right] 
-
    i m a M R \mu^2  \right\} z^5 \\
& + \bigg\{c_{l}^2 + 
2 \left[ (M^2 - 
          a^2) + m^2 a^2 + n(n+1) R_{-}^2 - 
2 M^2 c_{l}\right] \mu^2  \\
& \quad \, \, \, \, \, + 
    2 M^2 (R^2 + 4 M R + 3 M^2 - a^2) \mu^4 \bigg\} 
z^4 \\
& + 
 2 \left\{ \left(n + \frac{1}{2}\right) 
R_{-}
\left[c_{l} + 
       M (R - M) \mu^2 \right] + 
i m a M R \mu^2 \right\} \mu^2 z^3 \\
& + \left[ (M^2 - a^2) + m^2 a^2 + n(n+1) R_{-}^2 
    + 2 M (R + M) (c_{l} - 2 \mu^2  M^2 ) \right] \mu^4 z^2 \\
& + 
 2 M \left[\left(n + \frac{1}{2}\right) R_{-} (R + M) + 
    i m a R\right]  \mu^6 z  + M^2 (M^2 + a^2 + R^2) \mu^8 = 0
\end{align*}
and such that
\begin{equation*}
- \left(n + \frac{1}{2}\right) R_{-} - M(R+M) \, \frac{z^2 + \mu^2}{z}
- [c_l - 2M (R+2M) \mu^2] \, \frac{z}{z^2 + \mu^2} 
\end{equation*}
is part of $(0,\infty) \times {\mathbb{R}}$. 
\end{lemma}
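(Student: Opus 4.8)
The plan is to read this lemma as a rational change of variables applied to the spectral characterization already proved in Theorem~\ref{basictheorem}. The first move is to restrict that theorem to the open lower half-plane $\mathbb{R}\times(-\infty,0)$. This is harmless, since that region is disjoint from $\mathbb{R}$ and hence contained both in $\mathbb{C}\setminus\big((-\infty,-\mu]\cup[\mu,\infty)\big)$ and in $\mathbb{C}\setminus S$ (recall $S\subseteq\mathbb{R}$, and $S=\mathbb{R}$ only in the special case $R=r_+$, which is anyway excluded from the lower half-plane). Consequently, under the standing genericity hypothesis that the solutions of (\ref{eqgamma}) avoid $(1/2).\mathbb{Z}$, Theorem~\ref{basictheorem} reduces to: for $\lambda$ in the open lower half-plane, $\ker(A-\lambda B-\lambda^2)$ is non-trivial if and only if the spectral equation (\ref{spectralequation}) holds for some $n\in\mathbb{N}$.

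Next I would isolate the square-root term carried by $\gamma$ in (\ref{L2gamma}). Writing $Q(\lambda)$ for the radicand in (\ref{L2gamma}) and multiplying (\ref{spectralequation}) through by $R_-$, the equation takes the form $[Q(\lambda)]^{1/2}=E_n(\lambda)$, where
\[
E_n(\lambda):=-\Big(n+\tfrac12\Big)R_- -2M(R+M)(\mu^2-\lambda^2)^{1/2}-\frac{c_l-2M(R+2M)\mu^2}{2(\mu^2-\lambda^2)^{1/2}}.
\]
Because $\lambda\in\mathbb{C}\setminus S$ forces $Q(\lambda)\notin(-\infty,0]$, the principal root $[Q(\lambda)]^{1/2}$ has strictly positive real part, and the two square roots of $Q(\lambda)$ have nonzero real parts of opposite sign. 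Hence, for each fixed $n$, the relation $[Q(\lambda)]^{1/2}=E_n(\lambda)$ is equivalent to the pair of conditions $E_n(\lambda)^2=Q(\lambda)$ and $\mathrm{Re}\,E_n(\lambda)>0$: squaring pins $E_n$ to $\pm[Q]^{1/2}$, and the positivity of the real part selects the $+$ branch.

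The substitution is then $z:=i\lambda+(\mu^2-\lambda^2)^{1/2}$, with inverse $\lambda=-\frac{i}{2z}(z^2-\mu^2)$, under which $(\mu^2-\lambda^2)^{1/2}=\frac{z^2+\mu^2}{2z}$. A short real-part computation gives the conformal bijection onto the region named in the lemma: with $z=x+iy$ one finds $\mathrm{Im}\,\lambda=-\frac{x}{2}(1-\mu^2/|z|^2)$, so $\mathrm{Im}\,\lambda<0$ is equivalent to $x>0$ together with $|z|>\mu$, i.e.\ $z\in(\mathbb{C}\setminus B_\mu(0))\cap\big((0,\infty)\times\mathbb{R}\big)$; on the forward map, $\mathrm{Re}\,z=-\mathrm{Im}\,\lambda+\mathrm{Re}[(\mu^2-\lambda^2)^{1/2}]>0$ confirms the image. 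Substituting $(\mu^2-\lambda^2)^{1/2}=\frac{z^2+\mu^2}{2z}$ into $E_n$ turns it into
\[
-\Big(n+\tfrac12\Big)R_- -M(R+M)\,\frac{z^2+\mu^2}{z}-\big[c_l-2M(R+2M)\mu^2\big]\frac{z}{z^2+\mu^2},
\]
so the condition $\mathrm{Re}\,E_n>0$ is exactly the requirement that this expression lie in $(0,\infty)\times\mathbb{R}$, which is the side condition of the lemma.

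Finally, I would convert $E_n(\lambda)^2=Q(\lambda)$ into the stated polynomial. Using $\lambda=-\frac{i}{2z}(z^2-\mu^2)$ one has $\lambda^2=-\frac{(z^2-\mu^2)^2}{4z^2}$, so $Q$ becomes rational in $z$; likewise $E_n$ is rational, and clearing denominators by $z^2(z^2+\mu^2)^2$ yields a polynomial of degree $8$. I would organize the algebra by setting $P(z):=E_n\cdot z(z^2+\mu^2)$, an explicit quartic, so that the left side is $P(z)^2$, and by writing $Q\cdot z^2(z^2+\mu^2)^2$ through the factors $(z^2\mp\mu^2)$, then matching coefficients. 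Several cheap consistency checks already confirm the identification: the $z^8$ coefficient is $M^2\big[(R+M)^2-(2MR-a^2)\big]=M^2(M^2+a^2+R^2)$, the constant term equals this times $\mu^8$, and the $z^7$ coefficient is $2M\big[(n+\tfrac12)R_-(R+M)-imaR\big]$, all matching the displayed equation. Assembling the bijection, the translation of $\mathrm{Re}\,E_n>0$ into the side condition, and the polynomial form of $E_n^2=Q$ gives the claimed equivalence. The hard part is purely computational: carrying out the full degree-$8$ coefficient match without error, where the two distinct denominators $z$ and $z^2+\mu^2$ together with the single imaginary contribution $2imaMR$ make the bookkeeping delicate.
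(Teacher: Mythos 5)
Your proposal is correct and follows essentially the same route as the paper, whose entire proof reads ``The statement follows by direct calculation from Theorem~\ref{basictheorem} and Lemma~\ref{conformalmap}'': you restrict Theorem~\ref{basictheorem} to the lower half-plane, apply the biholomorphism of Lemma~\ref{conformalmap} together with $(\mu^2-\lambda^2)^{1/2}=(z^2+\mu^2)/(2z)$, and convert $[Q]^{1/2}=E_n$ into the pair $E_n^2=Q$ (the degree-$8$ polynomial) plus $\mathrm{Re}\,E_n>0$ (the side condition). The only part you leave unfinished --- the full degree-$8$ coefficient match, of which you correctly verify the $z^8$, $z^7$ and constant terms --- is precisely what the paper also leaves implicit under ``direct calculation.''
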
 

\begin{proof}
The statement follows by direct calculation from 
Theorem~\ref{basictheorem} and Lemma~\ref{conformalmap}.
\end{proof}

\begin{lemma}
If $R = r_{+}$, i.e., $R_{-} = 2 (M^2 - a^2)^{1/2}$, 
and $\lambda$ satisfies (\ref{exceptions1}),
then $\lambda \in {\mathbb{R}} \times (-\infty,0)$ is such 
that   
$\ker(A - \lambda B - \lambda^2)$ is non-trivial if and only if  
\begin{equation*}
\lambda = - \, \frac{i}{2z} \, (z^2 - \mu^2)
\end{equation*}
for $z \in ({\mathbb{C}} \setminus B_{\mu}(0)) 
\cap ((0,\infty) \times {\mathbb{R}})$ satisfies 
\begin{align}
z^4 & + \frac{(2n + 1)(M^2 - a^2)^{1/2} + i ma}{M(2r_{+}+M)}
\, z^3 + \frac{c_{l}- 2 M^2 \mu^2}{M(2r_{+}+M)} \, z^2  \nonumber \\
& + \frac{(2n + 1)(M^2 - a^2)^{1/2} + i ma}{M(2r_{+}+M)} 
\, \mu^2 z + \frac{M}{2r_{+}+M} \, \mu^4 = 0 \, \, .  \label{spectralcondition}
\end{align}

\end{lemma}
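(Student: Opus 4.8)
The plan is to specialize Theorem~\ref{basictheorem} to $R = r_{+}$ and to rationalize the resulting spectral condition through the conformal map of Lemma~\ref{conformalmap}. First I would collect the simplifications forced by $R = r_{+}$: then $R_{-} = 2(M^2-a^2)^{1/2}$, $2MR - a^2 = r_{+}^2$ and $R_{-}^2/4 = M^2-a^2$, so that the radicand in (\ref{gammas}) collapses to the perfect square $-(2Mr_{+}\lambda + ma)^2$ and the two roots of (\ref{eqgamma}) are those displayed in (\ref{gammas1}). Since $S = \mathbb{R}$ in this case, for $\lambda \in \mathbb{R}\times(-\infty,0)$ the requirements $\lambda \in \mathbb{C}\setminus((-\infty,-\mu]\cup[\mu,\infty))$ and $\lambda \in \mathbb{C}\setminus S$ of Theorem~\ref{basictheorem} hold automatically, while (\ref{exceptions1}) is exactly the hypothesis $2\gamma \notin \mathbb{Z}$ needed there.

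The one point requiring care is the choice of branch in (\ref{L2gamma}). Because $[\,\,]^{1/2}$ is the principal root and its argument $-(2Mr_{+}\lambda + ma)^2$ misses $(-\infty,0]$ for non-real $\lambda$, the $L^2$-root (\ref{L2gamma}) equals $\tfrac12 + \tfrac{i}{R_{-}}(2Mr_{+}\lambda + ma)$ precisely when $\mathrm{Re}\,[\,i(2Mr_{+}\lambda + ma)] = -2Mr_{+}\,\mathrm{Im}\,\lambda$ is positive, that is, when $\mathrm{Im}\,\lambda < 0$. Thus on $\mathbb{R}\times(-\infty,0)$ the correct root is forced to be the upper sign in (\ref{gammas1}), and the separate positivity requirement carried in the preceding, general lemma is automatically met and may be dropped. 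This is the conceptual reason the present statement is the cleaner one: since $\gamma$ is now rational (indeed affine) in $\lambda$, no squaring is performed, so no spurious doubling of the polynomial degree occurs.

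With this $\gamma$ inserted into (\ref{spectralequation}), the only surviving irrationality is $(\mu^2-\lambda^2)^{1/2}$, and here I would invoke Lemma~\ref{conformalmap}: the map $\lambda = -\tfrac{i}{2z}(z^2-\mu^2)$ sends $z \in (\mathbb{C}\setminus B_{\mu}(0))\cap((0,\infty)\times\mathbb{R})$ bijectively onto $\{\mathrm{Im}\,\lambda < 0\}$ and satisfies $(\mu^2-\lambda^2)^{1/2} = (z^2+\mu^2)/(2z)$. After substituting, every term is rational in $z$, so clearing denominators by multiplication with $z(z^2+\mu^2)$ turns (\ref{spectralequation}) into a polynomial identity. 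Collecting powers of $z$ yields a quartic with leading coefficient $M(2r_{+}+M)$; dividing by that constant gives (\ref{spectralcondition}).

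What remains is the verification of the five coefficients, which I expect to be routine. The quartic and constant coefficients arise from $Mr_{+}(z^2-\mu^2)(z^2+\mu^2)$ together with $M(r_{+}+M)(z^2+\mu^2)^2$ and reduce to $M(2r_{+}+M)$ and $M^2\mu^4$; the cubic and linear coefficients are $R_{-}(n+\tfrac12)+ima$ multiplied by $1$ and by $\mu^2$, in which $R_{-}(n+\tfrac12) = (2n+1)(M^2-a^2)^{1/2}$; and the quadratic coefficient becomes $c_l - 2M^2\mu^2$ after the cancellation $2M(r_{+}+M) - 2M(r_{+}+2M) = -2M^2$. The main thing to monitor throughout is the branch fixed in the second step, since adopting the lower sign in (\ref{gammas1}) would produce the spurious companion quartic rather than (\ref{spectralcondition}).
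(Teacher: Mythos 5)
Your proposal is correct, and it follows the same route the paper intends: its proof of this lemma is literally the one line ``direct calculation from Theorem~\ref{basictheorem} and Lemma~\ref{conformalmap},'' and your specialization $R_-=2(M^2-a^2)^{1/2}$, $2Mr_+-a^2=r_+^2$, the branch selection $\gamma=\tfrac12+\tfrac{i}{R_-}(2Mr_+\lambda+ma)$ forced by $\mathrm{Im}\,\lambda<0$, and the substitution $(\mu^2-\lambda^2)^{1/2}=(z^2+\mu^2)/(2z)$ reproduce exactly the quartic (\ref{spectralcondition}), with all five coefficients checking out. Your observation that the side condition of the general-$R$ lemma becomes automatic here because $\gamma$ is affine in $\lambda$ (no squaring, hence no spurious branch) is a correct and welcome explication of why the present statement carries no extra constraint.
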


\begin{proof}
The statement follows by direct calculation from 
Theorem~\ref{basictheorem} and Lemma~\ref{conformalmap}.
\end{proof}

We note that if $z \in ({\mathbb{C}} \setminus B_{\mu}(0)) 
\cap ((0,\infty) \times {\mathbb{R}})$ satisfies 
(\ref{spectralcondition}), then $z^{*}$ is a solution 
of (\ref{spectralcondition}), where $m$ is replaced by 
$-m$, that is contained in 
$({\mathbb{C}} \setminus B_{\mu}(0)) 
\cap ((0,\infty) \times {\mathbb{R}})$.

\section{Unstable modes}
In this section we proceed to solve~\eqref{spectralcondition} and show
the existence of unstable modes in a subregion of the parameter
space. Although the roots of any forth degree polynomial, such as that
in~\eqref{spectralcondition}, are known explicitly, the 
 expressions, seen as functions of all the parameters involved
(i.e. $a$, $\mu$, $n$, $l$ and $m$)\footnote{Since the solutions are scalable
  in $M$ we will only present results for $M=1$.}, are too complicated to get
any
intuitive understanding of the problem just from their analytical form. Thus, we
find it more insightful to simply do a direct numerical evaluation at different
parameters' values. This approach allows us to find the unstable modes
described below.
\newline
\linebreak
Before continuing, we mention a few words about the numerical errors. 
Besides conducting error estimations of the eigenvalues found, we also checked
our results by finding the roots with the  Newton Raphson (complex) method and 
then compare them with the values found by
direct evaluation, giving identical results up to error. 
We find that the errors  can be made as small as desired just by
increasing the number of figures used in the calculations.
All values presented in this paper are accurate at least up to the last
figure shown.
\newline
\linebreak
We find unstable modes in a range of values of
$\mu$, which increases with $l$, $m$ and $a$. That region decreases with $n$, but the
dependence on $n$ seems to be much smaller than that on  $l$, $m$ and $a$, and
hence for the majority of the results presented 
here we just set $n=0$. Let the reader assume $n=0$ from now on unless otherwise specified.
We find unstable modes for $l$ and $m$ starting at 3 and large $a$, mostly
with $l=m$, although for large $l$ we also find unstable modes for some $m<l$. To
give a few examples, for $l=m=3$ we find instabilities with $a=0.9976$ and larger,
while for $l=m=10$ we find them for $a=0.9833$ and larger. A more complete
description is given in the figures. 
The curves in Figure~\ref{lambda_al}
represent unstable regions in $\mu$ corresponding to four values of $a$, for
fixed $l=m$; and to four values of $l=m$ and fixed $a$.
\begin{figure} 
\centering
\includegraphics[width=0.50\textwidth,height=!]{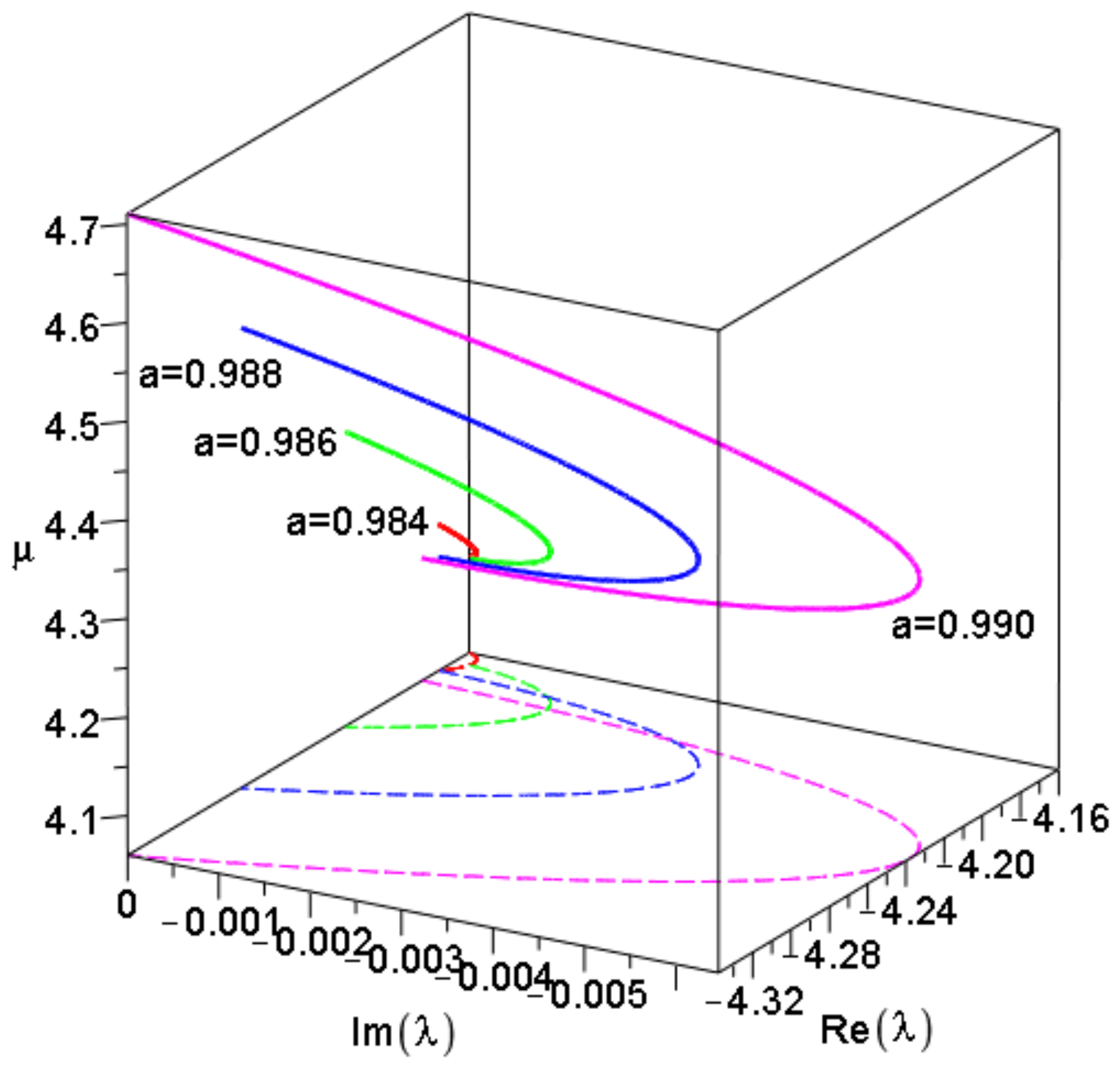}
\includegraphics[width=0.49\textwidth,height=!]{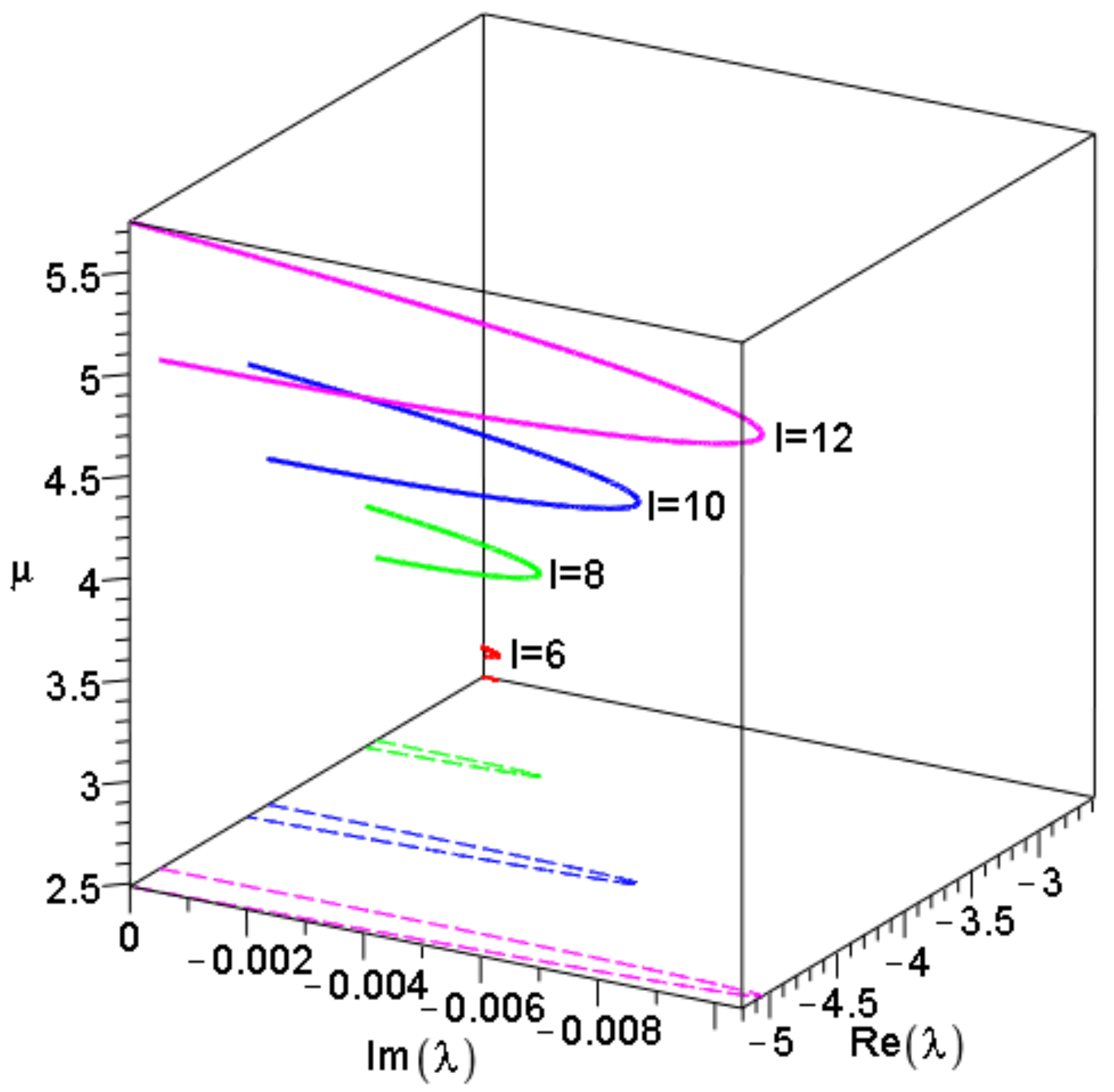}
\caption{Eigenvalues $\lambda$ of unstable modes ($\lambda<0$) for varying $\mu$, $a$ and $l=m$. 
$\mu$ is shown in the vertical axis, while the horizontal axes represent the real and imaginary part of
$\lambda$. The dashed lines are a projection of the curves into the complex plane to improve visibility.
{\em left panel:} $l=m=10$; $a=0.984$ (red), 0.986 (green), 0.988 (blue) and
0.99 (magenta). 
{\em right panel:} $a=0.99$; $l=m=6$ (red), 8 (green), 10 (blue) and 12 (magenta).}
\label{lambda_al}
\end{figure}
In each case, we show the complete range of values $\mu$ for which we found
unstable modes (vertical axis of the figures).
Interestingly, in all the cases studied (including others not shown in the
figures) that range starts exactly (up to numerical accuracy) at the 
limit given in Corollary~\ref{lowerbound}. Finally, in
Figure~\ref{f:eigenfunctions} we show the norm of the
eigenfunctions~\eqref{eq:eigenfunction} (see Theorem~\ref{t:eigenfunction}) as a
function of $r-R_{+}$ for some values of $\mu$. In the inset we use a logarithmic scale in
the horizontal axis to show the details close to $R_{+}$.
\begin{figure} 
\centering
\includegraphics[width=0.7\textwidth,height=!]{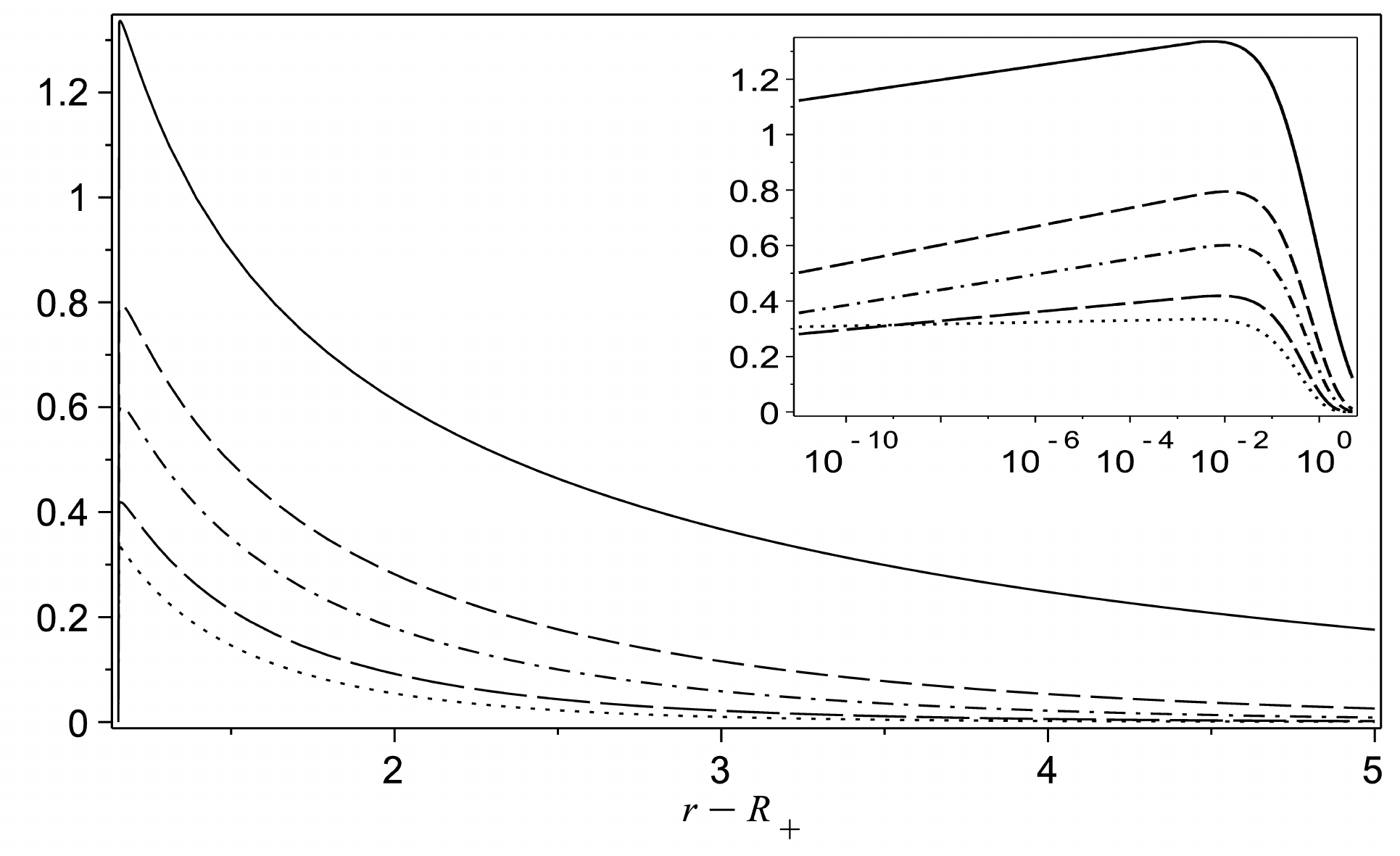}
\caption{Eigenfunctions of unstable modes with $n=0$, $l=m=10$, $a=0.988$; and
  $\mu=4.184$ (solid), $4.237$ (dash), $4.290$ (dot dash), $ 4.396$ (long dash), $4.502$ 
(dots).}
\label{f:eigenfunctions}
\end{figure}
\newline
\linebreak
Since previous works have reported instabilities only for large values of $a$,
it is worth searching for instabilities with the smallest possible $a$.
Figure~\ref{aminl} shows the minimum value of $a$, $a_{\rm min}$, for which we could find
instabilities for given $l$. For this task we only investigated modes with $m=l$, since
usually setting  $m<l$ gives rise to less (if any) unstable modes. 
Although an exhaustive search was made to determine $a_{\rm min}$, it is
impossible to strictly rule out the possible existence of unstable modes
beyond $a_{\rm min}$
with our method of direct evaluation, since an infinite number of such
evaluations would be necessary to completely cover any subregion of the
parameter space (unless the continuous parameters, $a$ and $\mu$, are both
fixed).
In general, for given $a$ a range of $\mu$ values give unstable
modes. If $a$ is then decreased, that range gets smaller until, at $a=a_{\rm
  min}$, only one value of $\mu$ gives unstable modes (see the left panel of
figure~\ref{lambda_al}). Each of these values are show as annotations in Figure~\ref{aminl}.
\begin{figure} 
\centering
\includegraphics[width=0.7\textwidth,height=!]{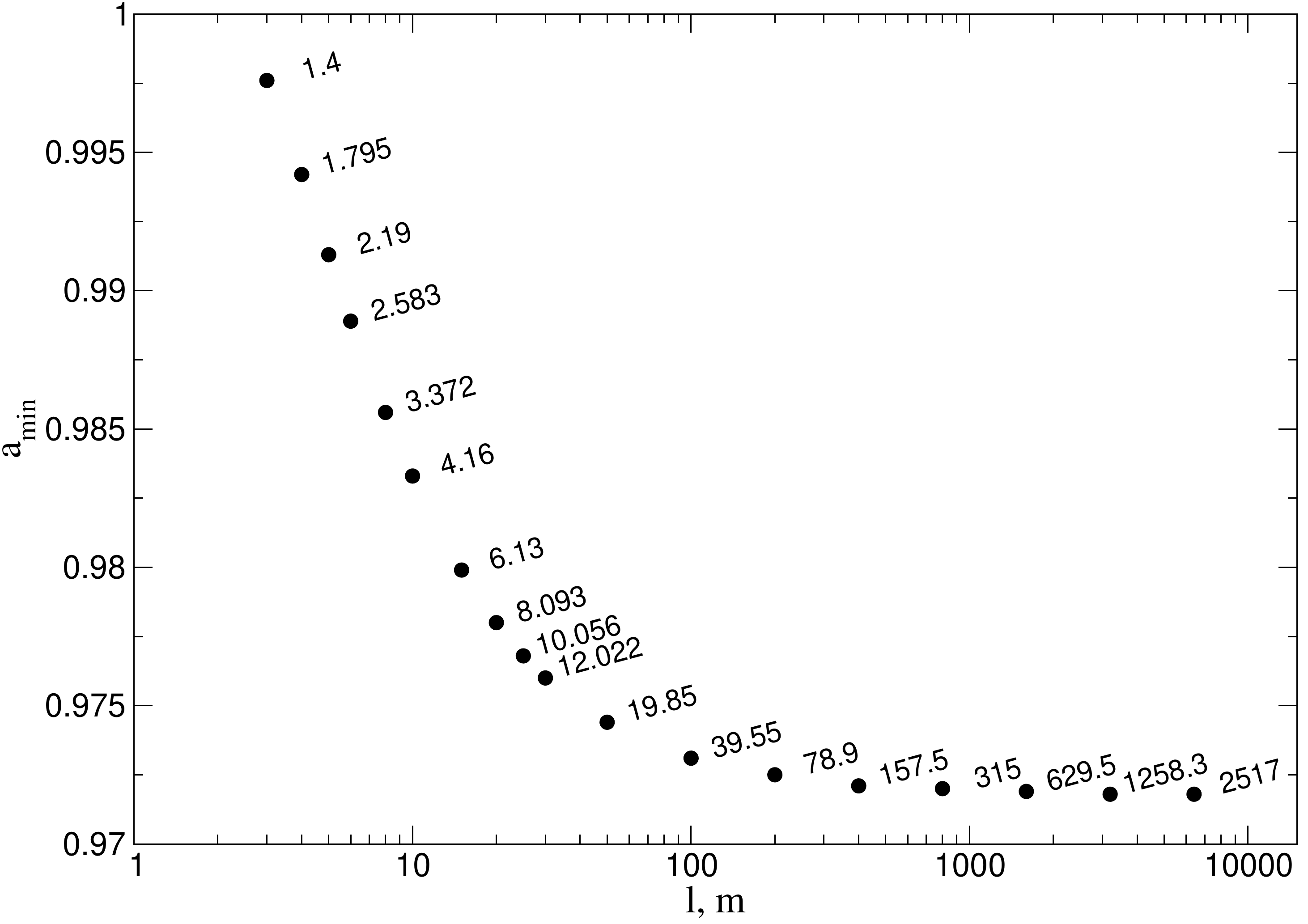}
\caption{Minimum $a$ for which unstable modes were found for given $l=m$. The
  annotated numbers show the approximate value of $\mu$.}
\label{aminl}
\end{figure}
As seen in the figure, $a_{\rm min}$ decreases with $l$ and seems to asymptote
to a value $a_{\rm asym.}\approx0.9718$. It would be interesting to determine the mathematical root of this
threshold. This task will be pursued in a future work.


\section{Discussion of the Results}

From the Klein-Gordon equation in a Kerr background, we 
 derive a simplified model equation showing the 
instability of the reduced, by separation in the azimuth 
angle in Boyer-Lindquist coordinates, field for a certain range 
of the masses $\mu$ of the field. In addition, we give 
a well-posed initial value formulation of that reduced 
equation, along with a stability analysis of the corresponding 
field. The latter shows instability down to rotational parameters 
$a/M \approx 0.97$. This result supports claims 
of previous analytical and numerical investigations that show 
instability of 
the massive Klein-Gordon field for $a$ 
extremely close to $1$. 
\newline
\linebreak 
From here, mathematical investigation could proceed 
in $2$ directions. First, it might be possible, to use the 
model for the proof of the instability of the massive 
Klein-Gordon equation in a Kerr background, using a perturbative 
approach. Another direction consists in further analysis
of the model in order to find the mathematical root 
of the instability as well as an abstraction to a larger 
class of equations that includes the massive Klein-Gordon 
equation on a Kerr background.

\section*{Acknowledgments} 
H.B. is thankful for the hospitality and support
of the `Department of Gravitation and Mathematical Physics',
(ICN, Miguel Alcubierre), 
Universidad Nacional Autonoma de 
Mexico, Mexico City, Mexico and 
the `Division for Theoretical Astrophysics' (TAT,  
K. Kokkotas) of the Institute for Astronomy and 
Astrophysics   
at the Eberhard-Karls-University Tuebingen. 
This work was supported in part by CONACyT
grants 82787 and 167335, DGAPA-UNAM through grant IN115311, SNI-M\'exico, and 
the SFB/Transregio 7 on 
``Gravitational Wave Astronomy'' of the German Science Foundation
(DFG). M.M. acknowledges DGAPA-UNAM for a postdoctoral grant.
 
\section{Appendix}

\begin{figure} 
\centering
\includegraphics[width=5.6cm,height=5.6cm]{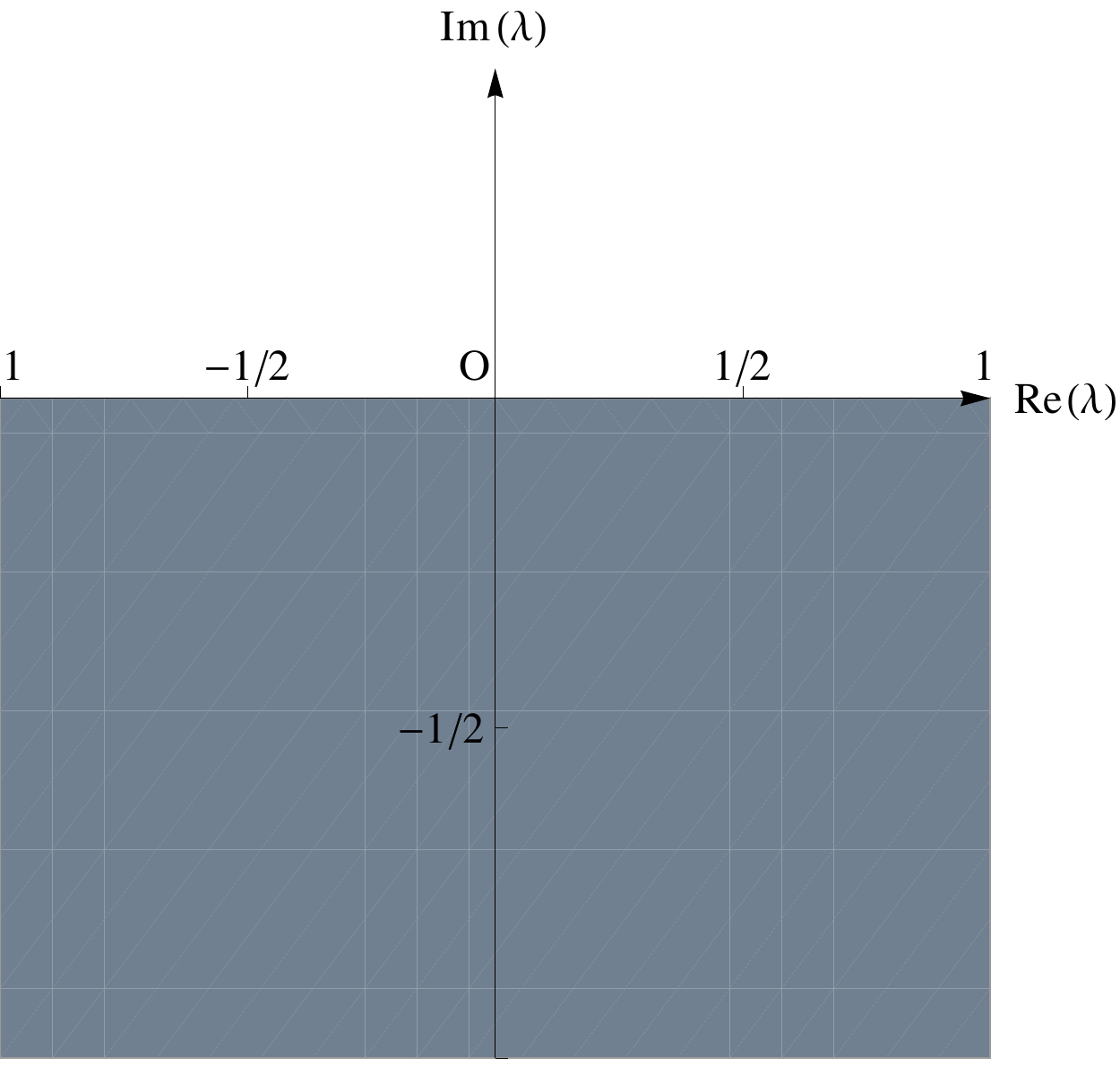}
\includegraphics[width=5.6cm,height=5.6cm]{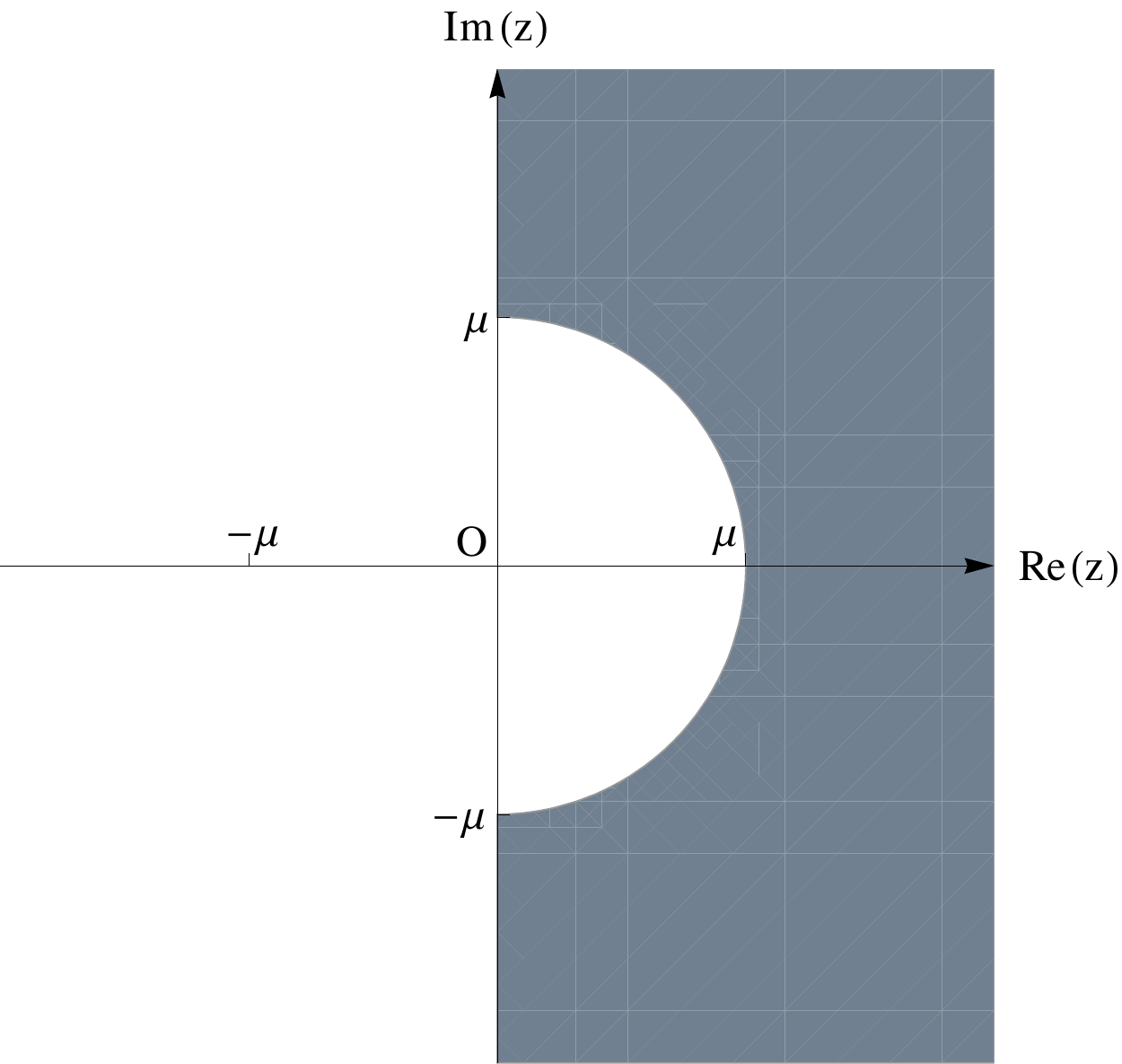}
\caption{Sketch of domain and range of the map $g$ from 
Lemma~\ref{conformalmap}.}
\label{fig1}
\end{figure}

\begin{lemma} \label{conformalmap}
Let $\mu \in [0,\infty)$. By 
\begin{equation*}
g(\lambda) := i \lambda + (\mu^2 - \lambda^2)^{1/2} 
\end{equation*}
for every $\lambda \in {\mathbb{R}} \times (-\infty,0)$, 
where 
$(\mu^2 - \lambda^2)^{1/2}$ denotes the square root with 
strictly positive real part, there is defined a 
biholomorphic map 
\begin{equation*}
g : {\mathbb{R}} \times (-\infty,0) \rightarrow ({\mathbb{C}} \setminus B_{\mu}(0)) 
\cap ((0,\infty) \times {\mathbb{R}}) 
\end{equation*}
with inverse 
\begin{equation*}
g^{-1} : ({\mathbb{C}} \setminus B_{\mu}(0)) 
\cap ((0,\infty) \times {\mathbb{R}}) \rightarrow 
{\mathbb{R}} \times (-\infty,0) 
\end{equation*}
given by 
\begin{equation*}
g^{-1}(z) = - \, \frac{i}{2z} \, (z^2 - \mu^2) 
\end{equation*}
for every $z \in ({\mathbb{C}} \setminus B_{\mu}(0)) 
\cap ((0,\infty) \times {\mathbb{R}})$. In addition, 
\begin{equation*}
\left[\mu^2 - (g^{-1}(z))^2 \right]^{1/2} = \frac{1}{2 z} \, (z^2 + \mu^2) 
\end{equation*}
for every $z \in ({\mathbb{C}} \setminus B_{\mu}(0)) 
\cap ((0,\infty) \times {\mathbb{R}})$.
\end{lemma}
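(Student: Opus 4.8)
The plan is to avoid verifying biholomorphy directly and instead to exhibit the stated candidate inverse $h(z) := -\tfrac{i}{2z}(z^2-\mu^2)$ on $D' := (\mathbb{C}\setminus B_\mu(0))\cap((0,\infty)\times\mathbb{R})$, and then show that $g$ (defined on $D := \mathbb{R}\times(-\infty,0)$) and $h$ are mutually inverse holomorphic maps between $D$ and $D'$; two mutually inverse holomorphic bijections are automatically biholomorphic. Before anything else I would check that $g$ is well defined and holomorphic on $D$: writing $\lambda = x+iy$ with $y<0$, the imaginary part of $\mu^2-\lambda^2$ equals $-2xy$, which vanishes only for $x=0$, where $\mu^2-\lambda^2 = \mu^2+y^2>0$; hence $\mu^2-\lambda^2$ never meets the cut $(-\infty,0]$, so the principal root $w:=(\mu^2-\lambda^2)^{1/2}$ with $\operatorname{Re}w>0$ is holomorphic there. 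The map $h$ is manifestly holomorphic on $D'$ since $z\neq 0$ there.

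The computational heart is the algebraic identity $g(\lambda)^2 - \mu^2 = 2i\lambda\,g(\lambda)$, obtained by expanding $(i\lambda+w)^2$ and using $w^2 = \mu^2-\lambda^2$. Feeding this into $h$ collapses $h(g(\lambda))$ to $-\tfrac{i}{2g}\cdot 2i\lambda g = \lambda$, giving $h\circ g = \mathrm{id}_D$ immediately (here $g(\lambda)\neq 0$ because $\operatorname{Re}g(\lambda)>0$, as below). For the reverse composition I would set $\lambda := h(z)$, compute $z-i\lambda = \tfrac{1}{2z}(z^2+\mu^2)$, and check $(z-i\lambda)^2 = \mu^2-\lambda^2$ by a one-line expansion using $2i\lambda z = z^2-\mu^2$. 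This shows $z-i\lambda$ is some square root of $\mu^2-\lambda^2$, and the essential subtlety is branch bookkeeping: I must confirm it is the principal one, i.e. that $\operatorname{Re}(z-i\lambda) = \tfrac{1}{2}\operatorname{Re}(z)\bigl(1+\mu^2/|z|^2\bigr)>0$, which holds since $\operatorname{Re}z>0$. Granting this, $g(h(z)) = i\lambda + (z-i\lambda) = z$, and the same computation yields the additional identity $[\mu^2-(g^{-1}(z))^2]^{1/2} = \tfrac{1}{2z}(z^2+\mu^2)$ for free.

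What remains, and what I expect to be the genuine obstacle rather than the algebra, is verifying the two set inclusions so that the compositions above are actually defined. For $h(D')\subseteq D$, writing $z=u+iv$ with $u>0$ and $|z|>\mu$, a direct expansion gives $\operatorname{Im}h(z) = -\tfrac{u}{2}\bigl(1-\mu^2/|z|^2\bigr)<0$, so $h(z)$ lies in the lower half-plane. For $g(D)\subseteq D'$, positivity of the real part is immediate, since $\operatorname{Re}g(\lambda) = -y+\operatorname{Re}w>0$. The non-obvious inclusion is $|g(\lambda)|>\mu$; here I would feed back the identity $h(g(\lambda))=\lambda$, which reads $z-\mu^2/z = 2i\lambda$ with $z:=g(\lambda)$, and compare real parts: $\operatorname{Re}(z-\mu^2/z) = \operatorname{Re}(z)\bigl(1-\mu^2/|z|^2\bigr)$ must equal $-2\operatorname{Im}(\lambda) = -2y>0$, and since $\operatorname{Re}z>0$ this forces $1-\mu^2/|z|^2>0$, i.e. $|z|>\mu$. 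With both inclusions and both compositions established, $g:D\to D'$ is a holomorphic bijection with holomorphic inverse $h=g^{-1}$, which is the assertion. The boundary behaviour, with the segment $(-\mu,\mu)$ mapping onto the arc $|z|=\mu$ and the outer rays onto the imaginary axis, is a useful consistency check but is not logically needed.
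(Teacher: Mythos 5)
Your proposal is correct and follows essentially the same route as the paper's own proof: verify that $\mu^2-\lambda^2$ avoids the cut so $g$ is well defined, establish $h\circ g=\mathrm{id}$ and $g\circ h=\mathrm{id}$ by the same algebraic identities (including the real-part check that $z-i\lambda=\tfrac{1}{2z}(z^2+\mu^2)$ is the principal root), and obtain $|g(\lambda)|>\mu$ from the relation $z-\mu^2/z=2i\lambda$ (you argue this directly where the paper argues by contradiction, but the computation is identical). The only cosmetic difference is that you make explicit the standing fact that mutually inverse holomorphic bijections are biholomorphic, which the paper leaves implicit.
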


\begin{proof}
First, we note that $g$ is well-defined. For this, 
let $\lambda \in 
{\mathbb{R}} \times (-\infty,0)$ and 
$\lambda_1 := \textrm{Re}(\lambda),
\lambda_2 := \textrm{Im}(\lambda) < 0$. Then
\begin{equation*}
\mu^2 - \lambda^2 = \mu^2 - \lambda_1^2 + \lambda_2^2 - 2 i \lambda_1 \lambda_2
\, \, .
\end{equation*}
Hence $\mu^2 - \lambda^2$ is real iff $\lambda_1 = 0$. In the latter case,
\begin{equation*}
\textrm{Re}\,(\mu^2 - \lambda^2) = \mu^2 + \lambda_2^2 > 0 \, \, .
\end{equation*}
Therefore, $\mu^2 - \lambda^2 \in {\mathbb{C}} \setminus ((-\infty,0] \times \{0\})$, 
and there is precisely one square root 
of $\mu^2 - \lambda^2$ with strictly positive
real part.  
Further, 
\begin{equation*}
\textrm{Re}\,[\,i \lambda +(\mu^2 - \lambda^2)^{1/2}\,] 
= - \lambda_2 
+ \textrm{Re} \, (\mu^2 - \lambda^2)^{1/2} > 0 \, \, . 
\end{equation*}
In particular, if 
\begin{equation*}
z := i \lambda + (\mu^2 - \lambda^2)^{1/2}
\end{equation*}
is such that $|z| \leq \mu$, then
\begin{equation*}
z^2 - 2 i z \lambda - \lambda^2  = (z - i \lambda)^2 = \mu^2 - \lambda^2 \, \, .
\end{equation*}
Hence 
\begin{equation*}
z^2 - \mu^2 = 2 i z \lambda 
\end{equation*}
and 
\begin{equation*}
\lambda = - \frac{i}{2} \left(z - \frac{\mu^2}{z}\right) \, \, . 
\end{equation*}
The latter implies that 
\begin{equation*}
\lambda_2 = \textrm{Im} \lambda = - \frac{\textrm{Re}(z)}{2} \left( 
1 - \frac{\mu^2}{|z|^2}
\right) 
\end{equation*}
and hence that $\textrm{Re}(z) \leq 0$.$\lightning$
As a consequence, $|z| > \mu$. 
\newline
\linebreak 
For the second step, let $z \in ({\mathbb{C}} \setminus B_{\mu}(0)) 
\cap ((0,\infty) \times {\mathbb{R}})$, $x := \textrm{Re}(z),$ and 
$y := \textrm{Im}(z)$. Then 
\begin{align*}
& - \frac{i}{2} \left(z - \frac{\mu^2}{z}\right) = 
- \frac{i}{2} \left(x + iy - \frac{\mu^2}{x+iy} \right) =
- \frac{i}{2} \left[x + iy - \frac{\mu^2}{|z|^2} \, (x-iy) \right] \\
& = 
\frac{1}{2} \left[y 
\left(1 + \frac{\mu^2}{|z|^2}\right) - i x 
\left(1 - \frac{\mu^2}{|z|^2}\right) \right] \in {\mathbb{R}} \times (-\infty,0)
\end{align*}
and 
\begin{align*}
g\left(- \frac{i}{2} \left(z - \frac{\mu^2}{z}\right) \right) & = 
\frac{1}{2} \left(z - \frac{\mu^2}{z}\right) + \left[\mu^2 + 
\frac{1}{4} \left(z - \frac{\mu^2}{z}\right)^2\right]^{1/2} \\
& =
\frac{1}{2} \left(z - \frac{\mu^2}{z}\right) + \left[ 
\frac{1}{4} \left(z + \frac{\mu^2}{z}\right)^2 \right]^{1/2} 
\, \, .
\end{align*}
Since 
\begin{align*}
& \frac{1}{2} \left(z + \frac{\mu^2}{z}\right) = 
\frac{1}{2} \left[x + iy + \frac{\mu^2}{|z|^2} \, (x-iy) \right] \\
& =
\frac{1}{2} \left[x \left(1 + \frac{\mu^2}{|z|^2}\right) 
+ iy \left(1 - \frac{\mu^2}{|z|^2}\right)\right] \in (0,\infty) \times {\mathbb{R}}
\, \, , 
\end{align*}
the latter implies that 
\begin{align*}
& g\left(- \frac{i}{2} 
\left(z - \frac{\mu^2}{z}\right) \right) = 
\frac{1}{2} \left(z - \frac{\mu^2}{z}\right) + \left[ 
\frac{1}{4} \left(z + \frac{\mu^2}{z}\right)^2 \right]^{1/2} \\
& =  
\frac{1}{2} \left(z - \frac{\mu^2}{z}\right) + 
\frac{1}{2} \left(z + \frac{\mu^2}{z}\right) = z \, \, .
\end{align*}
In particular, we conclude that by 
\begin{equation*}
h(z) := - \, \frac{i}{2} \left(z - \frac{\mu^2}{z} \right) = - 
\frac{i}{2z}\, (z^2 - \mu^2)
\end{equation*}
for every $z \in ({\mathbb{C}} \setminus B_{\mu}(0)) 
\cap ((0,\infty) \times {\mathbb{R}})$, there is defined a map 
\begin{equation*}
h : ({\mathbb{C}} \setminus B_{\mu}(0)) 
\cap ((0,\infty) \times {\mathbb{R}}) \rightarrow 
{\mathbb{R}} \times (-\infty,0) 
\end{equation*}
such that 
\begin{equation*}
(g \circ h)(z) = z 
\end{equation*}
for every $z \in ({\mathbb{C}} \setminus B_{\mu}(0)) 
\cap ((0,\infty) \times {\mathbb{R}})$. Therefore,  
$g$ is surjective. 
\newline
\linebreak
Further, if $\mu = 0$,
\begin{equation*}
(h \circ g)(\lambda) = - \, \frac{i}{2} \, [\,
i \lambda + (- \lambda^2)^{1/2}\,] = - \, \frac{i}{2} \, (\,
i \lambda + i \lambda \, ) = \lambda 
\end{equation*}
and if $\mu \neq 0$,
\begin{align*}
& (h \circ g)(\lambda) = - \, \frac{i}{2} \left[
i \lambda + (\mu^2 - \lambda^2)^{1/2} 
- \frac{\mu^2}{i \lambda + (\mu^2 - \lambda^2)^{1/2}} 
\right] \\
& =
- \, \frac{i}{2} \left[i \lambda + (\mu^2 - \lambda^2)^{1/2} 
- \mu^2 \, \frac{i \lambda - (\mu^2 - \lambda^2)^{1/2}}
{-\lambda^2 - (\mu^2 - \lambda^2)} \right] \\
& = 
- \, \frac{i}{2} \left[i \lambda + (\mu^2 - \lambda^2)^{1/2} 
+ 
i \lambda - (\mu^2 - \lambda^2)^{1/2}\right] = \lambda 
\end{align*}
for every $\lambda \in {\mathbb{R}} \times (-\infty,0)$. Therefore, 
$g$ is injective and altogether bijective with inverse $h$. Finally, it follows for 
$\lambda \in {\mathbb{R}} \times (-\infty,0)$ that 
\begin{align*}
& - (\mu^2 - \lambda^2)^{1/2} = i \lambda - g(\lambda) 
=  i h(g(\lambda)) - g(\lambda)
=  - i \, \frac{i}{2} \left(g(\lambda) - \frac{\mu^2}{g(\lambda)} \right) 
- g(\lambda) \\
& = - \frac{1}{2} \left(g(\lambda) + \frac{\mu^2}{g(\lambda)} \right) = 
- \frac{1}{2 g(\lambda)} \, [(g(\lambda))^2 + \mu^2]
\end{align*}
and hence for every $z \in ({\mathbb{C}} \setminus B_{\mu}(0)) 
\cap ((0,\infty) \times {\mathbb{R}})$ that 
\begin{equation*}
\left[\mu^2 - (g^{-1}(z))^2\right]^{1/2} 
= \frac{1}{2 z} \, (z^2 + \mu^2) \,\, .
\end{equation*}
\end{proof}

\pagebreak

\end{document}